\newcommand{\llangle}{\langle\!\langle}
\newcommand{\rrangle}{\rangle\!\rangle}
\newcommand{\NN}{\mathcal{N}}
\newcommand{\FF}{\mathcal{F}}
\newcommand{\LL}{\mathcal{L}}
\newcommand{\TT}{\mathcal{T}}
\newcommand{\ZZ}{\mathcal{Z}}
\newcommand{\MM}{\mathcal{M}}
\newcommand{\BB}{\mathcal{B}}
\DeclarePairedDelimiterX{\inner}[2]{\langle}{\rangle}{#1|#2}
\DeclarePairedDelimiterX{\expect}[3]{\langle}{\rangle}{#1|#2|#3}
\DeclareMathOperator*{\poly}{poly}
\definecolor{myblue}{RGB}{20, 60, 180}
\definecolor{col}{RGB}{250, 64, 47}
\newtheorem{theorem}{Theorem}[section]
\newtheorem{lemma}{Lemma}[section]
\newtheorem{corollary}{Corollary}[section]
\newtheorem{prop}{Proposition}[section]
\newtheorem{defn}{Definition}[section]
\algnewcommand{\procname}[1]{\textnormal{\textsc{#1}}}
\begin{document}
\title{Beyond Belief Propagation: Cluster-Corrected Tensor Network Contraction with Exponential Convergence}

\author{Siddhant Midha}
\email{sm7456@princeton.edu}
\altaffiliation{equal contribution}
\affiliation{Princeton Quantum Initiative, Princeton University, Princeton, NJ 08544}

\author{Yifan F. Zhang}
\email{yz4281@princeton.edu}
\altaffiliation{equal contribution}
\affiliation{Department of Electrical and Computer Engineering, Princeton University, Princeton, NJ 08544}

\begin{abstract}
Tensor network contraction on arbitrary graphs is a fundamental computational challenge with applications ranging from quantum simulation to error correction. While belief propagation (BP) provides a powerful approximation algorithm for this task, its accuracy limitations are poorly understood and systematic improvements remain elusive. Here, we develop a rigorous theoretical framework for BP in tensor networks, leveraging insights from statistical mechanics to devise a \textit{cluster expansion} that systematically improves the BP approximation. We prove that the cluster expansion converges exponentially fast if an object called the \emph{loop contribution} decays sufficiently fast with the loop size, giving a rigorous error bound on BP. We also provide a simple and efficient algorithm to compute the cluster expansion to arbitrary order. We demonstrate the efficacy of our method on the two-dimensional Ising model, where we find that our method significantly improves upon BP and existing corrective algorithms such as loop series expansion. Our work opens the door to a systematic theory of BP for tensor networks and its applications in decoding quantum error-correcting codes and simulating quantum systems.

\end{abstract}

\maketitle
\tableofcontents

\section{Introduction}

Tensor networks (TNs) comprise a set of powerful mathematical and computational tools widely used in condensed matter physics and quantum information science. Originally developed for quantum many-body systems~\cite{white1992density,white1993density}, tensor networks have evolved into a unifying framework that connects diverse areas of physics~\cite{schollwock2011density,orus2014practical,cirac2021matrix,verstraete2004matrix,vidal2007entanglement} and computer science~\cite{Schuch2007PEPS,Schuch2007,Landau2015,MarkovShi2008,Arad2013}.

Despite their conceptual elegance and broad applicability, the practical utility of tensor networks is fundamentally limited by the computational complexity of tensor contraction. Contracting a tensor network—--summing over all internal indices to compute the final result--—is central to extracting physical quantities from the network representation. However, this operation generally requires exponential runtime in the system size~\cite{Schuch2007PEPS,HaferkampPEPS,HarleyPEPS}, motivating the development of polynomial-time algorithms for approximate network contraction.

Many such algorithms have been developed, often exploiting special structures in the tensor network to simplify contractions. For example, time evolving block decimation (TEBD) \cite{vidal2003efficient,Vidal2004} leverages the bounded growth of entanglement to truncate the bond dimension. Other algorithms exploit the network geometry; in particular, for geometries without loops (such as one-dimensional matrix product states~\cite{Fannes1992,PerezGarcia2007} and tree tensor networks~\cite{Shi2006}), exact contraction becomes efficient if performed in the right order.

Belief propagation (BP) \cite{pearl1988probabilistic, bethe1935statistical, kirkley2021belief, laumann2008cavity, chertkov2008exactness}, a classical algorithm rooted in computer science and statistical physics, has recently emerged as a promising candidate for approximate contractions of tensor networks~\cite{leifer2008quantum,robeva2016duality}. Originally developed by Pearl for probabilistic inference in graphical models~\cite{pearl1988probabilistic}, BP found its theoretical foundation in Bethe's work on lattice statistical mechanics~\cite{bethe1935statistical}. BP has since proven useful in tasks such as decoding classical and quantum low-density parity check (LDPC) codes~\cite{gallager1962low, mackay1999good, yao2024belief, koutsioumpas2025automorphism, mceliece1998turbo, old2023generalized,roffe2020decoding}, machine learning~\cite{yedidia2003understanding, wainwright2008graphical}, and optimization~\cite{guerreschi2020quantum, bravyi2018simulation}. While BP was developed in the context of classical probability theory, it is equally applicable to quantum systems and has been widely adopted~\cite{BP_gauging,BPsimpleupdate,blockBPpeps,blockBPdecoder, poulin2008belief, poulin2008iterative, sahu2022efficient, hastings2007quantum, chen2022tensor, wang2024tensor}. Notably, BP-based techniques have been used to classically simulate major quantum experiments, challenging claims of quantum advantage~\cite{tindall2025dynamics, rudolph2025simulatingsamplingquantumcircuits}.

BP offers several advantages over other methods: it is polynomial time and parallelizable, applies to arbitrary geometries, and becomes exact on networks without loops. However, the conditions under which BP is a `good' approximation are still not clear, especially in loopy geometries. In addition, unlike methods such as TEBD, which can be systematically improved (e.g., by increasing the bond dimension), the traditional BP method is inherently rigid and lacks a tuning parameter to trade off computational resources for lower error rates.

In this work, we develop a systematic theory of belief propagation (BP) for tensor network contractions to address these challenges. Our main contributions are: (1) establishing rigorous control over the difference between the exact value $Z$ and the BP contraction $Z_0$, and (2) designing an efficient algorithm that systematically corrects the BP error with exponential convergence.

We achieve our results through a novel approach that overcomes fundamental limitations of existing methods. In Ref.~\cite{evenbly2024loopseriesexpansionstensor,park2025simulatingquantumdynamicstwodimensional}, the authors take inspirations from earlier work in statistical inference \cite{chertkov2006loop,chertkov2006loop2, chertkov2009approximate,parisi2006loop, montanari2005compute, mooij2012sufficient, gomez2007truncating, ramezanpour2015statistical} to construct a Taylor series known as the \emph{loop expansion} in order to correct the BP approximation value toward the ground truth. The na\"{\i}ve loop expansion starts from the BP value $Z_0$ at zeroth order and sums corrections called \emph{loop tensors} that grow larger at higher orders. While this seems like a natural tuning knob—--when fully summed, the loop expansion yields $Z$—--the method suffers from a critical flaw: it does not generally converge, even when individual loop tensors decay exponentially. The fundamental problem is that the expansion includes disconnected loops whose number grows combinatorially with size, overwhelming any exponential decay and causing divergence. The work \cite{evenbly2024loopseriesexpansionstensor} made the first step in mitigating the combinatorial growth through multiple self-consistent methods. 

To resolve this convergence failure, we leverage insights from statistical mechanics to construct an entirely different series: a \emph{cluster expansion} that converges to the logarithm of the ground truth, $\log(Z)$, rather than $Z$ itself. Starting from $\log(Z_0)$, our method sums modified corrections called \emph{clusters}, derived from loop tensors. Crucially, only \emph{connected clusters} contribute to our expansion, and their number grows at most exponentially—--not combinatorially. This fundamental difference ensures that our cluster expansion converges exponentially fast, solving the convergence problem that plagues the loop series method.

We rigorously prove the exponential convergence of our cluster expansion, provided that loop tensors decay exponentially with a sufficiently large exponent. This resolves the two main challenges of BP: first, it explains when BP provides a good approximation and supplies rigorous error estimates; second, it yields a polynomial-time algorithm that systematically and reliably improves BP results—--something that loop series expansions cannot achieve due to their inherent convergence problems.

To understand when loops decay and the cluster expansion converges, we conduct extensive numerical experiments computing the free energy density of the 2D Ising model and benchmark against the exact solution. We observe distinctive behaviors across the phase transition: while BP performs well deep in the high-temperature and low-temperature phases, it deviates from the ground truth because of the presence of long-ranged fluctuations at the critical point. There, we show that cluster expansion significantly improves the BP error and converges faster than loop expansion, independent of the system size.

The change from the loop expansion to the cluster expansion is physically intuitive. Borrowing wisdom from statistical mechanics, we interpret tensor networks as partition functions. As an object that changes multiplicatively with system sizes, it is inherently unstable and is hard to approximate with a series expansion. In contrast, the logarithm of the partition function, or the free energy, is an extensive quantity that changes additively with system sizes. It has an additive response to local perturbation and can thus be approximated with a series expansion. This intuition is rigorized in our convergence proof and is supported by our numerical experiments.

\subsection{Comparison with existing and concurrent work}\label{sec:comparison}
We compare our results with earlier and concurrent works. Ref. \cite{chertkov2006loop,chertkov2006loop2} first introduce the loop series expansion in probabilistic graphical models. Ref. \cite{evenbly2024loopseriesexpansionstensor} generalizes the loop series expansion to tensor networks. Notably, they propose several methods to overcome the divergence of the na\"{\i}ve loop expansion. Their key intuition is the extensivity of free energy which also motivates our paper. In particular, the combinatorial toolbox we will use formalizes the counting argument underlying their ``single-excitation'' and ``multi-excitation'' approximation. Our results advance this line of work in three key respects. First, the cluster expansion yields explicit expressions for correction terms at all orders. Second, it applies beyond the thermodynamic limit, whereas the previous methods are restricted to that setting. Lastly, we resolve an open question by showing that sufficiently fast loop decay guarantees exponential convergence of the cluster expansion.

There are also earlier attempts to derive modified series expansion with better convergence than the na\"{\i}ve loop expansion.  Notable examples include the \emph{linked cluster} expansion \cite{rigol2006numerical}, also known as the \emph{cluster-cumulant} expansion \cite{welling2012cluster} in the statistical inference literature. Ref. \cite{park2025simulatingquantumdynamicstwodimensional} employs the linked cluster expansion as one of the subroutines to compute observables in time-evolved quantum systems. Concurrent to our work, Ref. \cite{gray2025tensor} appeared which employs the linked cluster expansion to perform extensive numerical simulations of quantum ground states. We note that while the linked cluster expansion is conceptually very similar to the cluster expansion derived in our work, they are formally two different series. See Section \ref{sec:toy} for an example. However, we believe that our theory of convergence should apply to linked cluster expansion as well, since both series reflect fundamentally the same physics.


\subsection{Organization}
This paper is organized as follows. In Section~\ref{sec:bp}, we review the belief propagation algorithm and its connection to tensor networks. In Section~\ref{sec:cluster}, we introduce the loop and cluster expansions, present our main convergence theorem, and discuss its implications. In Section~\ref{sec:algorithm}, we outline the algorithmic procedure for computing the cluster expansion. In Section~\ref{sec:ising}, we present numerical results on the two-dimensional Ising model where the exact solution is known. Finally, in Section~\ref{sec:discussion}, we discuss potential extensions and applications of our work.

\begin{figure*}[t]   
    \centering
    \includegraphics[width=1.0\linewidth]{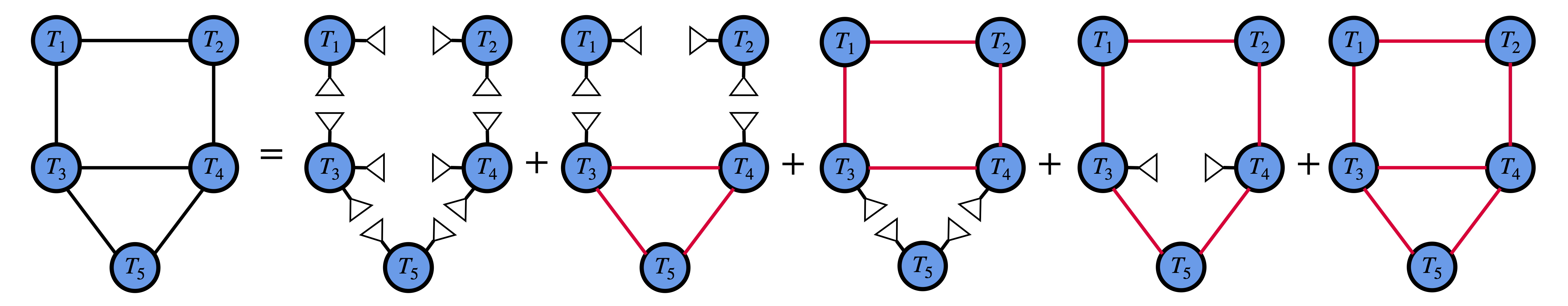}  
    \caption{\textbf{Loop series expansion}. The contraction of a five-vertex tensor network can be exactly represented as the sum of the BP vacuum and all the generalized loop excitations on the graph.}
    \label{fig:fig_bp}  
\end{figure*}

\section{Belief Propagation}\label{sec:bp}
In this section, we define our notation and introduce the BP algorithm. We then introduce the loops series expansion and show why it does not converge in general.

\subsection{Tensor networks}
We consider a tensor network, with no open indices, defined on a graph $G = (V,E)$ on $N$ sites with sets of vertices $V$ and edges $E$. For each vertex $v \in V$, We denote the neighbors of $v$ in $G$ as $\NN(v)$. The degree of a vertex in the graph is denoted $d(v) := |\NN(v)|$. The degree of a graph is denoted $\Delta(G) := \max_{v\in V}d(v)$. Next, we define the notion of a tensor network on the graph. For edge $(v,w)\in E$ we associate the \textit{bond} Hilbert space $\BB_{vw}$ with $\text{dim}(\BB_{vw})$ the \textit{bond dimension} on that edge, which we take to be uniform and equal to $\chi$ without loss of generality. Each vertex $v\in V$ is then equipped with a tensor $T_v \in \otimes_{n \in \NN(v)}\BB_{nv}$.

We refer to the triple $\TT = (\{T_v\}_{v\in V}, V, E)$ as a \textit{tensor network}. These networks have no `open' indices, and will be used for the belief propagation algorithm, as detailed in the following. Defined this way, the contraction of all the tensors in the network is a scalar, 
\begin{equation}
    \ZZ(\TT)= \star_{v\in V}T_v,
\end{equation}
where $\star$ denotes contraction of tensor indices. Following the convention in statistical mechanics, $\ZZ(\TT)$ is a sum of local terms over local Hilbert spaces, which is what usually defines a partition function in statistical mechanics. In fact, any partition function of a local statistical mechanical model can be written as a tensor network inheriting the locality of the model. Generalizing this language, we refer to $\ZZ(\TT)$ the \emph{partition function} of the tensor network. Note that $\ZZ(\TT)$ is strictly more general than the partition function in statistical mechanics: in general, tensors $T_v$ can be complex, while in statistical mechanics we only sum over non-negative quantities.

Following the same intuition, we will define the \emph{free energy} as the negative logarithm of the partition function, generalizing the definition in statistical mechanics.
\begin{equation}
    \FF(\TT) = -\log{\ZZ} 
\end{equation}
One might worry that $\ZZ$ can be complex and thus rendering $\FF(\TT)$ multi-valued. In the following discussion, we will always choose a normalization such that $\ZZ$ is positive and store the phase information separately. This ensures the uniqueness of $\FF(\TT)$ in the neighborhood where cluster expansion operates.

\subsection{Belief propagation}
Now we introduce the belief-propagation procedure. It begins with defining \textit{message tensors} on each edge of the graph, with $\mu_{v \to w} \in \BB_{vw}$ denoting the message from node $v$ to $w$. We define a set of fixed-point messages on the network through the notion of \textit{self-consistency} between the tensors and the messages. This requires that the contraction of all but one incoming message on any vertex $v\in V$ must result in the outgoing message on that edge. Mathematically, the self consistent set $\MM = \{\mu_{v\to w}\}_{v,w}$ satisfies for each $v\in V$ and each $s \in \NN(v)$,

\begin{equation}
        \left(\underset{n_i \in \NN(v)/\{n_j\}}{\otimes}\mu_{n_i \to v} \right)\star T_v = \mu_{v\to n_j}
\end{equation}
\noindent 
Schematically,
\begin{figure}[H]
    \centering
    \includegraphics[width=5cm, keepaspectratio]{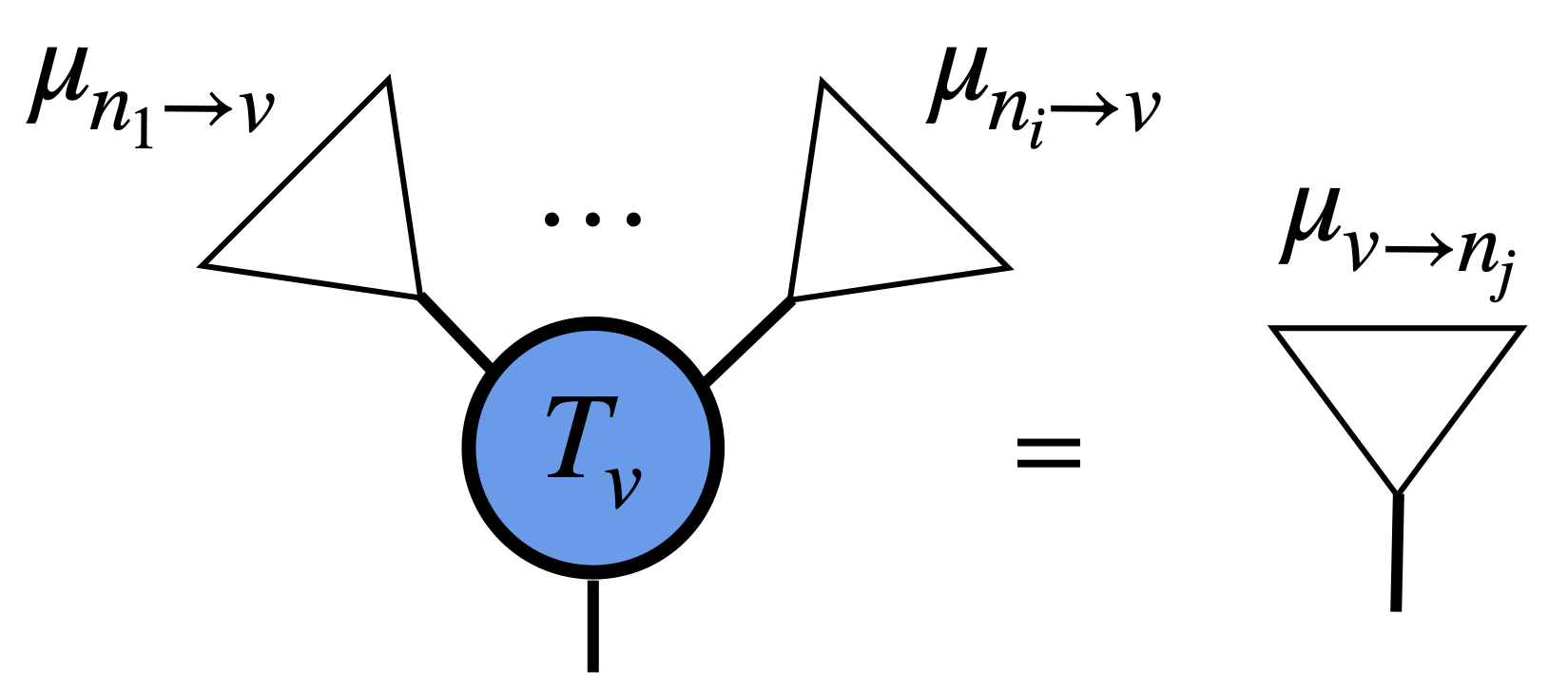} 
\end{figure}
where rank-one $\triangleright$ denote message tensors and $T_v$ is the local tensor at vertex $v$.

To evaluate the BP approximation, we first compute $I_{vw} = \mu_{v\to w} \star \mu_{w\to v}$, that is the inner product between $\mu_{v\to w}$ and $\mu_{w\to v}$ \footnote{note that no complex conjugation is take here, in other words $I_{vw}=I_{wv}$}.
For each $v\in V$, the local contribution to the partition function is then given as,
\begin{equation}\label{eq:tensor_z}
        Z^{(v)} := \left[\underset{n \in \NN(v)}{\bigotimes}\frac{\mu_{n\to v}}{\sqrt{I_{vw}}}\right] \star T_v,
\end{equation}
where we take the principle branch of the square root when $I_{vw}$ is complex. The BP vacuum solution to the partition function and the free energy are then given as, 

\begin{align}
    Z_0 &= \prod_{v\in V} Z^{(v)} \\ 
    F_0 &= -\sum_{v\in V}\log{Z^{(v)}}
\end{align}
This is graphically shown below. $Z_0$ is the BP approximation to $\ZZ$.
\begin{figure}[H]
    \centering
    \includegraphics[width=7cm, keepaspectratio]{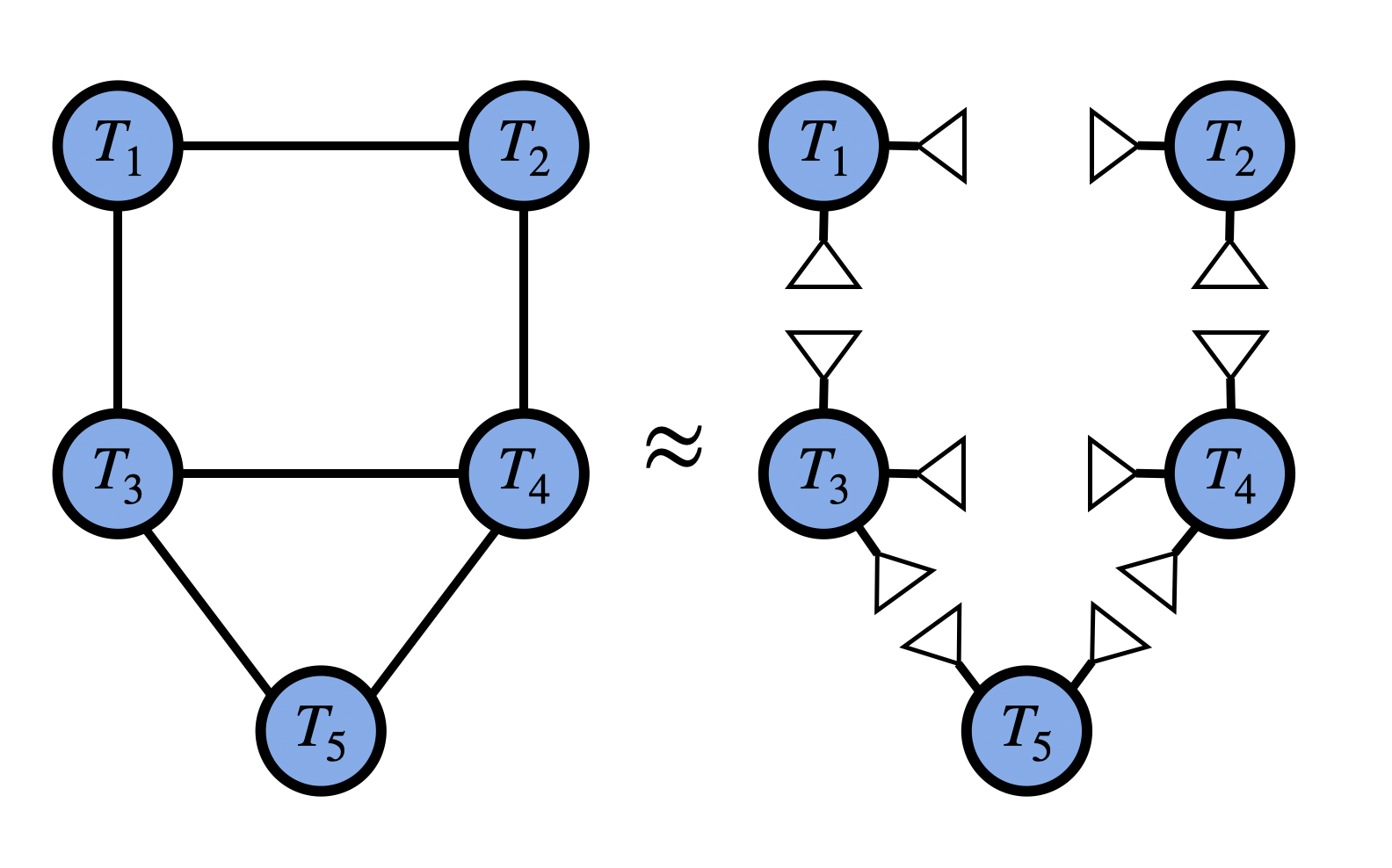} 
\end{figure}

Viewed as a mean-field approximation, the `traditional' BP algorithm is used to approximate the exact partition function with the vacuum solution $Z_0$. Physically, the message tensors can be viewed as a rank-one approximation of the influence of a complex environment. The local reduced density matrices are then given by contracting the rank-one environments into the local tensors as follows,

\begin{figure}[H]
    \centering
    \includegraphics[width=6cm, height = 4cm]{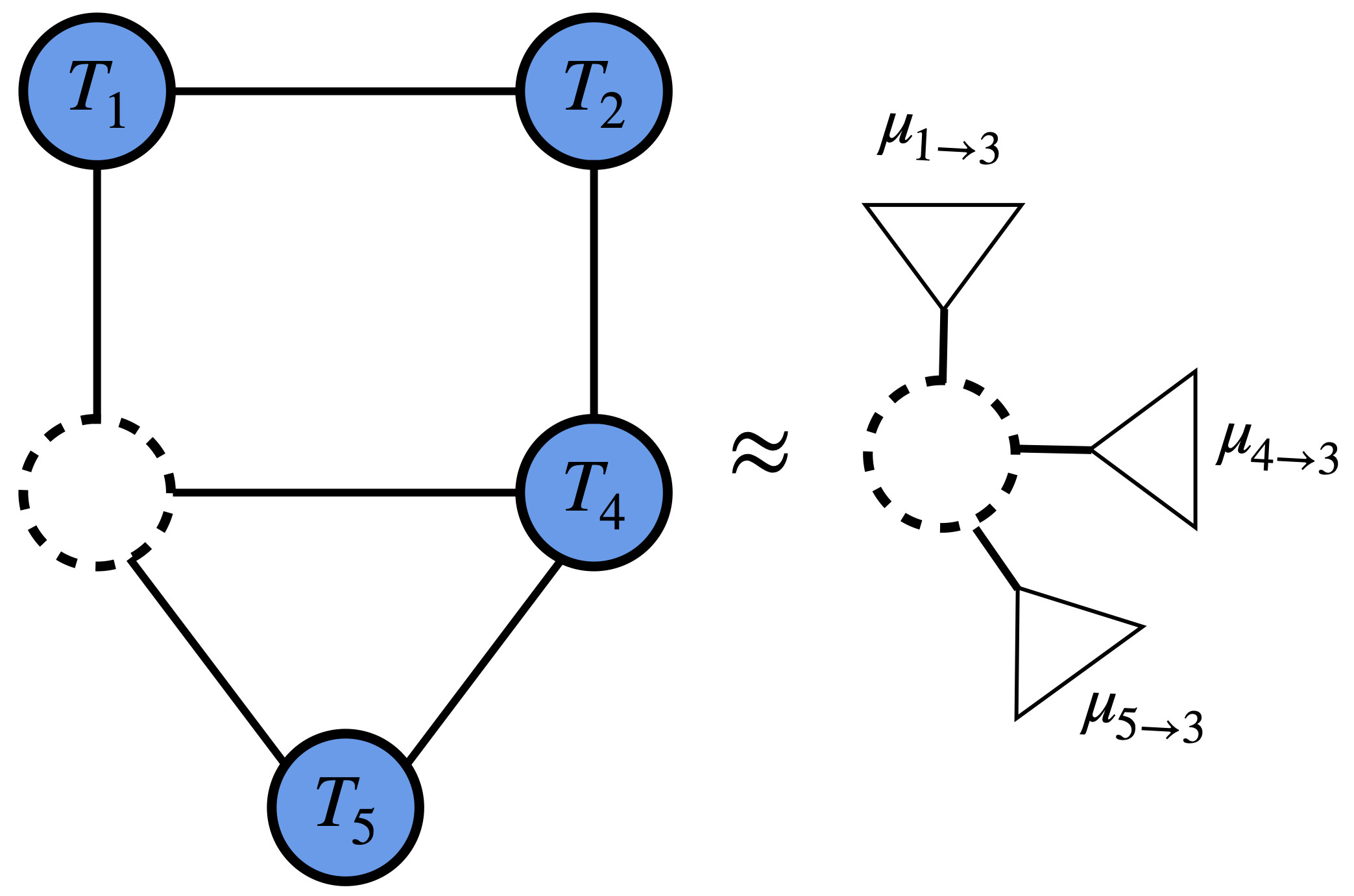} 
\end{figure}

\subsection{Loop series expansion}

The \textit{loop series expansion} \cite{evenbly2024loopseriesexpansionstensor} for the self-consistent messages in terms of the `generalized loops' on the graph can be written as follows. For each edge $e = (r,s)\in E$, consider the identity $\mathbbm{1} \in \LL(\BB_e)$. We define the orthogonal projector $\mathcal{P}_{rs}^\perp$ by expanding the identity as,
\begin{eqnarray}
    \mathbbm{1} = \frac{|\mu_{r\to s}\rrangle \llangle \mu_{s\to r}|}{I_{rs}} + \mathcal{P}_{rs}^\perp 
\end{eqnarray}
This ensures that $\mathcal{P}_{rs}^\perp|\mu_{r\to s}\rrangle = 0$ and $\llangle \mu_{s\to r}|\mathcal{P}_{rs}^\perp = 0$, hence $\mathcal{P}_{rs}^\perp$ carries contributions orthogonal to the BP vacuum. This is shown in Fig.~\ref{fig:fig_bp}(b), 

Now, consider the problem of evaluating the partition function $\ZZ$. Inserting the identity above at each edge in the network, one obtains $2^{|E|}$ terms. Each term can be expressed by an $|E|-$bit string $s: E \to \{0,1\}$, where $s(e)=0$ for edge $e =(r,s)\in E$ represents the BP vacuum contribution from $|\mu_{r\to s}\rrangle \llangle \mu_{s\to r}|$ and $s_e = 1$ represents the `excited edge' contribution from the orthogonal projector $\mathcal{P}_{rs}^\perp$. Each configuration $s$ defines an edge-induced subgraph \footnote{For a graph $(V, E)$, any subset of edges $F \subseteq E$ defines the \emph{edge-induced subgraph} $G[F] = (V_F, F)$, where $V_F$ is the set of vertices that are incident to edges in $F$.} $G_s \subset G$ of excited edges. This results in, 

\begin{equation}
    \ZZ = Z_0 + \sum_{s \neq \mathbf{0}}Z_s 
\end{equation}
where $Z_s$ denotes the contribution from configuration $s$ normalized by the vacuum contribution. Now, we note the that any configuration $s$ which has an `open' edge will vanish. The `vacuum' contribution is $Z_0$.

\begin{defn}[Generalized loops]\label{def:loop}
    Consider a graph $G=(V,E)$. A generalized loop is subgraph $C=(W,F)$ with $W \subseteq V$, $F \subseteq E$, with the property that the degree of any $w \in W$ in $C$ is at least two. The weight of a generalized loop is defined as the number of edges $|F|$.
\end{defn}

We denote the set of generalized loops in graph $G$ as $\LL_G$. Note that a generalized loop need not be a simple loop or even a connected subgraph. With mild abuse of notation, we refer to generalized loops simply as ``loops" and specify ``simple loops" when needed. We denote a loop as $l \in \LL_G$ with loop weight $|l|$.

\begin{lemma}
A non-zero excitation $Z_s$ is possible only if $G_s$ is a generalized loop in $G$ \cite{evenbly2024loopseriesexpansionstensor,chertkov2006loop}.
\end{lemma}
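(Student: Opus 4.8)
The plan is to prove the contrapositive: if the excited subgraph $G_s$ fails to be a generalized loop, then $Z_s = 0$. By Definition~\ref{def:loop}, $G_s=(V_{G_s},F)$ is a generalized loop precisely when every vertex incident to an excited edge has degree at least two in $G_s$. Vertices not touched by any excited edge lie outside $V_{G_s}$ and impose no constraint, while every vertex of $V_{G_s}$ has degree at least one. Hence if $G_s$ is not a generalized loop, there must exist a vertex $v \in V_{G_s}$ of degree exactly one in $G_s$: a single excited edge $(v,w)$ ``dangles'' from $v$, while every other edge incident to $v$ carries the vacuum insertion. I would isolate such a vertex and show that the contraction already vanishes locally around it.

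The core is a local computation at $v$. In the contraction defining $Z_s$, each non-excited edge $(n,v)$ contributes the rank-one insertion $\tfrac{1}{I_{nv}}|\mu_{n\to v}\rrangle\llangle\mu_{v\to n}|$, whose $v$-side factor feeds the incoming message $|\mu_{n\to v}\rrangle$ into $T_v$, whereas the lone excited edge $(v,w)$ contributes the orthogonal projector $\mathcal{P}^\perp_{vw}$. I would first contract $T_v$ against the vacuum messages on all of its non-excited edges. Up to a nonzero scalar normalization, the self-consistency relation
\begin{equation}
    \Bigl(\underset{n\in\NN(v)\setminus\{w\}}{\otimes}\mu_{n\to v}\Bigr)\star T_v = \mu_{v\to w}
\end{equation}
identifies the resulting tensor, whose only uncontracted index sits on the $v$-side of edge $(v,w)$, as exactly the outgoing message $|\mu_{v\to w}\rrangle$.

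Applying the excited-edge insertion then yields $\mathcal{P}^\perp_{vw}|\mu_{v\to w}\rrangle = 0$ by the defining property of the orthogonal projector. Since tensor contraction is order-independent, I may perform this local step first; it produces an intermediate tensor that is identically zero, and contracting a zero tensor with the remainder of the network gives $Z_s = 0$. This establishes the contrapositive and hence the lemma.

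The main obstacle I anticipate is bookkeeping rather than conceptual. I must fix the orientation of each edge insertion so that it is the ket $|\mu_{n\to v}\rrangle$, and not the bra $\llangle\mu_{v\to n}|$, that is contracted into $T_v$; and I must confirm that all the scalar normalizations (the $I_{nv}$ factors from the identity decomposition, together with the $\sqrt{I_{vw}}$ factors entering $Z^{(v)}$) are nonzero, so that they cannot rescue a vanishing contribution. I would also emphasize that the argument invokes only locality at a single degree-one vertex, so it applies uniformly irrespective of the global structure of $G_s$ or whether $G_s$ is connected.
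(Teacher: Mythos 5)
Your proof is correct and is precisely the standard argument: the paper itself does not prove this lemma (it defers to the cited references), but your contrapositive — isolate a degree-one vertex $v$ in $G_s$, contract the vacuum insertions on its other edges into $T_v$ to reproduce $\mu_{v\to w}$ via self-consistency, and kill it with $\mathcal{P}^\perp_{vw}|\mu_{v\to w}\rrangle=0$ — is exactly the mechanism the paper gestures at with its ``dangling excitation'' figure. Your bookkeeping concerns are the right ones and are easily discharged: the $I_{nv}$ must be nonzero for the vacuum/excited decomposition of the identity to exist at all, so the extra scalars cannot rescue the vanishing term.
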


Hence, contributions such as the following with a ``dangling excitation'' vanish,
\begin{figure}[H]
    \centering
    \includegraphics[width=4cm, keepaspectratio]{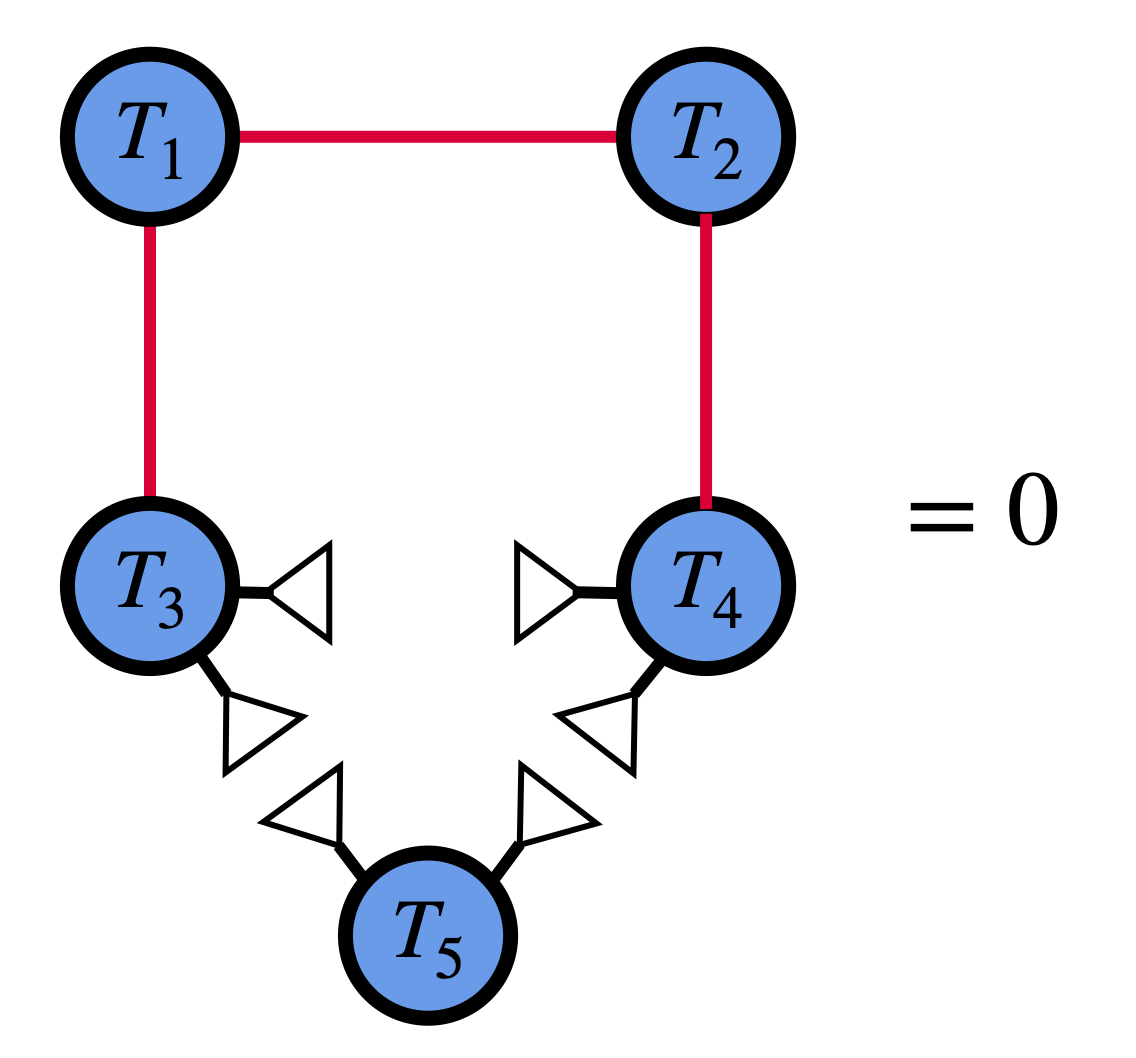} 
\end{figure}

Thus, it is possible to write the series expansion involving only those configurations which form generalized loops. Let us first define formally a \textit{loop correction}.

\begin{defn}[Loop correction]
Let \(l\) be a generalized loop in the graph. Each loop \(l\) has an associated loop correction \(Z_l\in\mathbb{C}\), defined as
\begin{equation}
  Z_l = \left( \prod_{(w,v)\in V(l)} \mathcal{P}_{wv}^\perp \right) \star \left( \prod_{(w,v) \notin V(l)} \mu_{wv} \otimes \mu_{vw} \right) \star \left( \prod_{v} T_v \right)
\end{equation}
Where \(V(l)\) is the set of vertices in the loop \(l\).
\end{defn}
For instance, a simple loop correction on the loop $l = \{(1,2),(2,3),(3,4),(1,4)\}$ on a square lattice is,  

\begin{figure}[H]
    \centering
    \includegraphics[width=3.8cm, keepaspectratio]{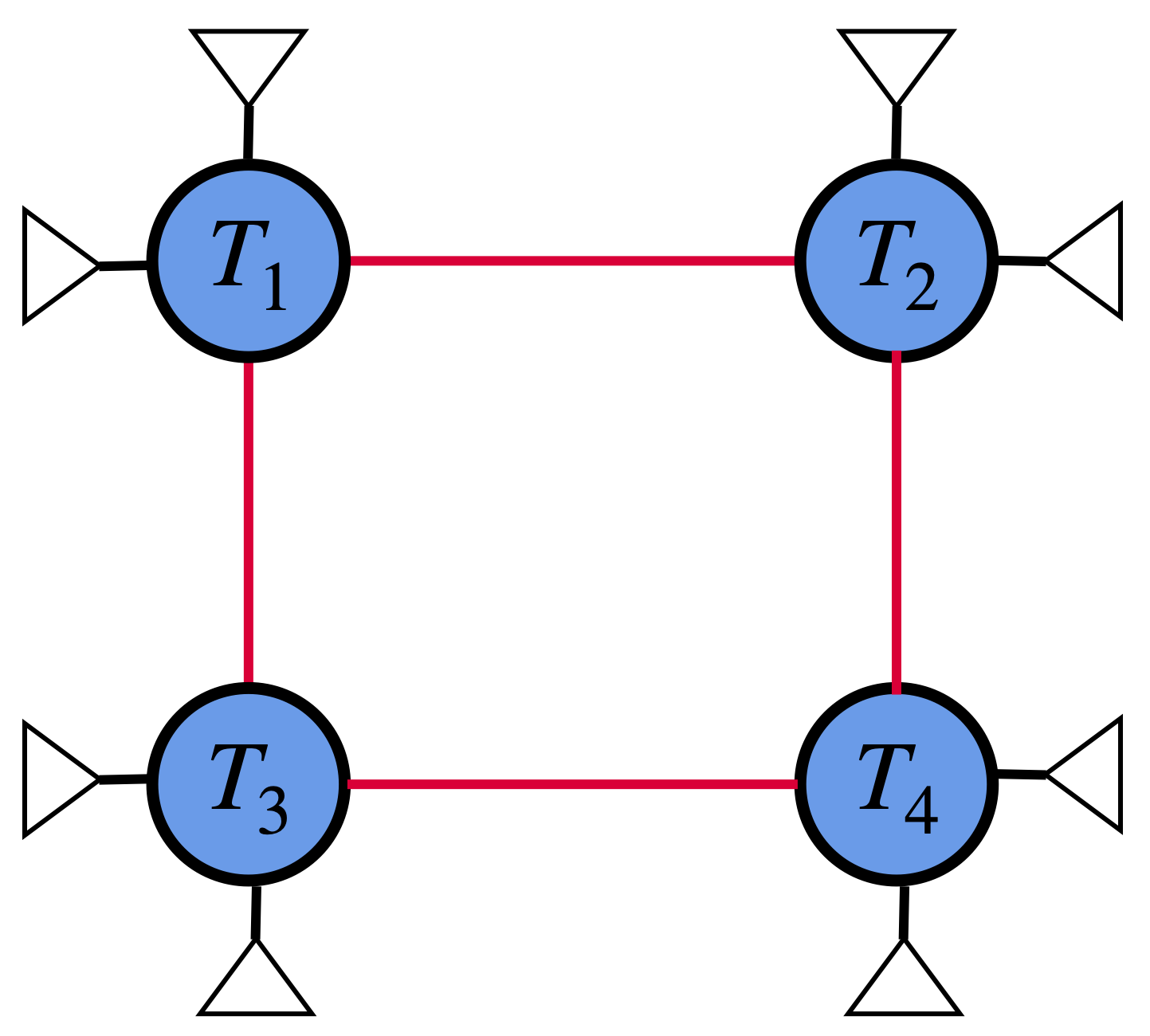} 
\end{figure}

where the message tensors from all other connected vertices are contracted in.

The loop series expansion is then given as follows.

\begin{lemma}[Loop series expansion] \label{lem:loop_expansion}
    Consider the expansion of the partition function of the tensor network $\TT = (\{T_v\}_{v\in V}, V, E)$ by resolving the identity at each edge. Then, we have that
    \begin{equation}
        \ZZ(\TT) = Z_0 +  \sum_{l\in \mathcal{L}_G} Z_{l}
    \end{equation}
    where, the only non-zero excited contributions are generalized loops $l$ in $G=(V,E)$.
\end{lemma}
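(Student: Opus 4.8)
The plan is to obtain the expansion term-by-term from the resolution of identity and then prune the sum down to generalized loops. Since $\ZZ(\TT)=\star_{v\in V}T_v$ amounts to contracting the bond identity $\mathbbm{1}$ on every edge, I would substitute $\mathbbm{1}=|\mu_{r\to s}\rrangle\llangle\mu_{s\to r}|/I_{rs}+\mathcal{P}_{rs}^{\perp}$ on each edge $e=(r,s)$ and multiply out. This produces $2^{|E|}$ terms, each labeled by a string $s:E\to\{0,1\}$ recording which edges carry the orthogonal projector ($s(e)=1$) and which carry the rank-one vacuum term ($s(e)=0$); writing $Z_s$ for the value of the term whose excited edges form the subgraph $G_s$, one has $\ZZ(\TT)=\sum_{s\in\{0,1\}^{E}}Z_s$. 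It then remains to (i) identify $Z_{\mathbf 0}$ with $Z_0$, (ii) show $Z_s=0$ unless $G_s$ is a generalized loop, and (iii) reindex the surviving terms as loop corrections.

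For step (i), note that when $s=\mathbf 0$ every edge carries $|\mu_{r\to s}\rrangle\llangle\mu_{s\to r}|/I_{rs}$, which I split symmetrically as $(\mu_{r\to s}/\sqrt{I_{rs}})\otimes(\mu_{s\to r}/\sqrt{I_{rs}})$ between the endpoints $r$ and $s$. Because each such factor is rank one, it severs the bond and the global contraction factorizes over vertices: vertex $v$ receives the inbound message $\mu_{n\to v}/\sqrt{I_{nv}}$ on each incident edge, contributing exactly $Z^{(v)}$, so $Z_{\mathbf 0}=\prod_v Z^{(v)}=Z_0$. This same splitting fixes the normalization convention for every other $s$ and makes the value of a surviving configuration agree with the loop correction $Z_l$ attached to $l=G_s$.

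The substantive step, and the one I expect to be the main obstacle, is step (ii): the vanishing of every non-loop configuration, which is the content of the lemma stated just above and which I would justify by a leaf-removal argument built on self-consistency. Suppose $G_s$ has a vertex $v$ with exactly one incident excited edge $(v,n_j)$; then all other edges at $v$ carry vacuum messages, and contracting these rank-one messages into $T_v$ yields, by the self-consistency relation $\big(\bigotimes_{n_i\in\NN(v)\setminus\{n_j\}}\mu_{n_i\to v}\big)\star T_v=\mu_{v\to n_j}$, precisely the outgoing message $\mu_{v\to n_j}$. But this vector lies in the vacuum direction on edge $(v,n_j)$ and is therefore annihilated by $\mathcal{P}_{vn_j}^{\perp}$, forcing $Z_s=0$. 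Hence any $G_s$ containing a degree-one vertex contributes nothing, and only subgraphs in which every incident vertex has degree at least two—exactly the generalized loops of Definition~\ref{def:loop}—can survive.

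Combining the three steps finishes the proof: the map $s\mapsto G_s$ restricts to a bijection between nonzero configurations $s\neq\mathbf 0$ and generalized loops $l\in\LL_G$, under which $Z_s=Z_l$, whence $\ZZ(\TT)=Z_0+\sum_{s\neq\mathbf 0}Z_s=Z_0+\sum_{l\in\LL_G}Z_l$. The only error-prone bookkeeping is maintaining the $1/\sqrt{I}$ normalizations consistently between the vacuum factorization in (i) and the loop-correction definition, which I would handle by assigning each edge's factor $1/I_{rs}$ symmetrically to its two endpoints throughout.
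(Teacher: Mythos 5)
Your proposal is correct and follows essentially the same route as the paper: the paper likewise inserts the resolution of identity $\mathbbm{1}=|\mu_{r\to s}\rrangle\llangle\mu_{s\to r}|/I_{rs}+\mathcal{P}_{rs}^{\perp}$ on every edge, identifies the all-vacuum term with $Z_0$, and invokes the vanishing of configurations containing a degree-one (``dangling'') excited edge --- which it states as a separate lemma citing prior work --- to restrict the surviving terms to generalized loops. Your leaf-removal argument via self-consistency and the annihilation $\mathcal{P}_{vn_j}^{\perp}|\mu_{v\to n_j}\rrangle=0$ is precisely the standard justification of that cited lemma, so nothing is missing.
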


We illustrate this expansion for a simple tensor network consisting of five vertices in Fig.~\ref{fig:fig_bp}. The net contraction is given as a sum of the BP vacuum along with all possible edge excitations, which appear as generalized loops.

\subsection{Divergence due to disconnected loops}

Ideally, one would like to approximate the loop expansion by truncating the series at some finite loop weight. Specifically, we set a cut-off weight and sum over loops with weights $\le m$. Denoting this by $\hat{Z}_m$, we have, 
\begin{equation}
    \hat{Z}_m = Z_0 + \sum_{\substack{ l\in \mathcal{L}_G \\ |l| \le m}} Z_l
\end{equation}

Empirically, one expects that if $Z_l$ decays exponentially fast with $l$, then $\hat{Z}_m$ provides a good approximation to $\mathcal{Z}(\mathcal{T})$ with a modest cutoff $m$. A major problem with this approach the need to sum over generalized loops that are disconnected. To see this, consider a tensor network (PEPS) defined on a $L \times L$ two-dimensional square lattice. On regular lattices such as the 2D square lattice, there are combinatorially many disconnected loops. For example, there are order $L^2 \choose
2$ disconnected loops of weight 8 formed by two loops of weight 4 on plaquettes (Fig. \ref{fig:disconnected_incompatibility}(a)). Going to higher weights, there are order $L^2 \choose 3$ disconnected loops formed by three connected loops, and so on. Therefore, as one goes to higher $m$, the number of terms grow combinatorially, outweighing the exponential decay of individual terms. This represents a significant bottleneck to the convergence of the loop series expansion, and motivates us to look for improved techniques. 
\section{Cluster expansion}\label{sec:cluster}

In this section, we introduce the cluster expansion and prove the main technical result: the tensor cluster expansion converges if loops $Z_l$ decay exponentially in $|l|$ with a sufficiently large exponent. Crucially, the cluster expansion technique overcomes the challenge of disconnected loops. We first give a physical picture about why cluster expansion provides a better series expansion than the loop series expansion. We then introduce the cluster expansion formally and state the main result. We give a toy example and compare with earlier work in the end.

\subsection{Physical intuition behind cluster expansion}
As we have argued in the previous section, tensor network contractions can be thought of as generalizations of partition functions in statistical mechanics by adding sign structures. Now, partition functions are not stable objects under local perturbations. For example, by heating any one site to infinite temperature, the partition function is changed by a constant \emph{multiplicative} factor. To account for this in a series expansion, each site must show up in a \textit{constant fraction} of terms. This is fundementally why the loop series necessitates the use of the disconnected loops and the combinatorial growth of the number of terms in the loop expansion of $\ZZ$.

On the other hand, the free energy $\FF$ is a well-behaved object under local perturbations. When one site is heated to infinite temperature, the free energy is changed only by a constant additive factor. This behavior is realizable in series expansions without disconnected objects: given a fixed site, only a \textit{constant number} of terms should be involved in the expansion. Therefore, one naively expects that the series expansion of the free energy is better behaved.

Another related intuition regarding the cluster expansion is the presence of non-linearity in the series to ensure convergence. As we will see, the non-linearity in the loop contributions added through the cluster method leads to provable convergence.

\subsection{Formal definition}

Now we formally introduce the cluster expansion. In particular, we employ the formalism of the abstract polymer model \cite{kotecky1986cluster} which provides black-box techniques to prove convergence. Throughout this subsection, we will work with the normalized tensor $\tilde{T}_v$ defined as follows.

\begin{equation}\label{eq:normalized_tensor}
    \tilde{T}_v = \frac{T_v}{Z^{(v)}},
\end{equation}
\noindent where $Z^{(v)}$ is the local contribution defined in Eq.~(\ref{eq:tensor_z}). Under this normalization, the BP contraction of $\tilde{T}_v$ is one, and correspondingly the BP free energy of $\tilde{T}_v$ is zero. We will compute the cluster expansion of $\FF(\tilde{T})$, which is related to $\FF(T)$ by a constant offset.

\begin{equation}
    \FF(T) = \FF(\tilde{T}) + \sum_v \log(Z^{(v)})
\end{equation}

\begin{figure*}[t]\label{fig:disconnected_incompatibility}   
    \centering
    \includegraphics[width=0.8\linewidth]{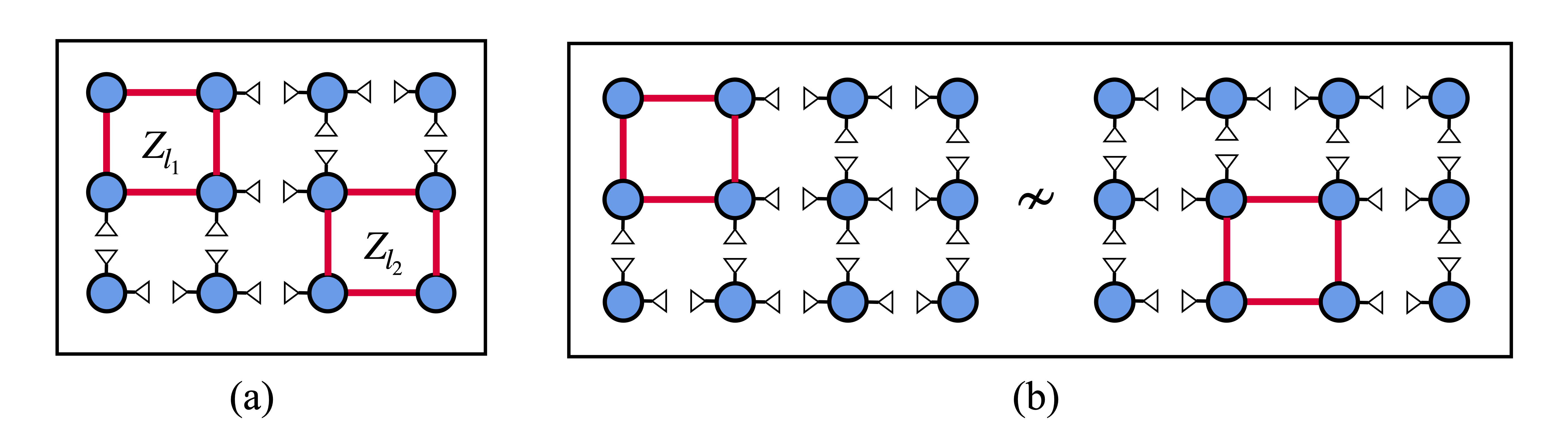}  
    \caption{(a) \textbf{Disconnectedness and incompatibility.} Example of a disconnected loop. (b) Example of two incompatible loops.}
    \label{fig:fig_cluster}  
\end{figure*}

Crucically, the loop series expansion of $\ZZ(\tilde{T})$ contains the contributions from all generalized loops, which includes connected as well as disconnected loops. We term the disconnected part as consisting of \emph{compatible} loops, in the following sense. Denote $\LL^c_G$ to be the set of connected loops in the graph $G$.

\begin{defn}[Compatible loops]\label{def:incompatibility}
Two loops \(l, l'\) are said to be \emph{compatible}, written \(l \sim l'\), if they do not overlap; that is, they share no vertex or edge in the underlying graph. A family \(\Gamma \subset \LL^c\) of loops is called \emph{compatible} if every pair of distinct loops in \(\Gamma\) is compatible.
\end{defn}
We give a pair of incompatible loops in Fig. \ref{fig:disconnected_incompatibility}(b). We note that the notion of \textit{loop compatibility} is conceptually similar, though not identical, to the notion of \textit{connectedness} of loops. Connectedness is a property of a single loop, describing whether it can be decomposed into two disconnected subgraphs. In contrast, loop compatibility is a relation between two distinct loops and does not depend on their individual connectedness. If two loops are compatible, their union forms a larger but disconnected loop.




Next, we define the object \emph{cluster}. Intuitively, clusters are a collection of loops, which can be described as a multiset as follows.

\begin{defn}[Clusters]\label{def:cluster}
  A cluster is a collection of tuples of the form $$\mathbf{W} = \{(l_1, \eta_1), (l_2, \eta_2), \ldots, (l_m, \eta_m)\}$$ where each \(l_i\in \LL\) is a loop and \(\eta_i\) is the multiplicity of the loop \(l_i\) in the cluster. The total number of loops in the cluster is denoted as $n_{\mathbf{W}} = \sum_{i=1}^m \eta_i$.
\end{defn}

We define the cluster weight $|\mathbf{W}| = \sum_{i} \eta_i |l_i|$, where $|l_i|$ is the weight of loop $l_i$. We also denote $\mathbf{W}! = \prod_i \eta_i !$. We denote the correction of the cluster $Z_{\mathbf{W}}$ as the product of the loop corrections raised to their respective multiplicities:
\begin{defn}[Cluster correction]
For a cluster \(\mathbf{W} = \{(l_1, \eta_1), (l_2, \eta_2), \ldots, (l_m, \eta_m)\}\), the cluster correction is defined as
\begin{equation}\label{eq:cluster_contribution}
  Z_{\mathbf{W}} = \prod_{i=1}^m Z_{l_i}^{\eta_i}.
\end{equation}
where $Z_{l_i}$ are the corresponding loop corrections.
\end{defn}

Given a cluster \(\mathbf{W}\), we define the \emph{interaction graph} as follows.

\begin{defn}[Interaction graph]\label{def:interaction_graph}
  Given a cluster \(\mathbf{W} = \{(l_1, \eta_1), (l_2, \eta_2), \ldots, (l_m, \eta_m)\}\), we define the interaction graph \(G_\mathbf{W} = (V_\mathbf{W}, E_\mathbf{W})\) with $|V_\mathbf{W}| = \sum_{i=1}^m \eta_i$ vertices with each loop \(l_i\) corresponds to \(\eta_i\) vertices. There is an edge $(l,l') \in  E_\mathbf{W}$ either if the loops \(l\) and \(l'\) are incompatible $l \not\sim l'$, or they are identical $l=l'$ 
\end{defn}

A cluster \(\mathbf{W}\) is called \emph{connected} if its interaction graph \(G_\mathbf{W}\) is connected—that is, there exists a path between any two vertices in \(G_\mathbf{W}\). This connectivity condition is crucial: we now establish that only connected clusters contribute to the free energy expansion.

\begin{lemma}[Connected clusters only]\label{lem:connected_cluster}
The free energy can be expressed as
\begin{equation}
  \FF(\tilde{T}) = \sum_{\text{connected} \, \mathbf{W}} \phi(\mathbf{W}) Z_{\mathbf{W}},
\end{equation}
where the sum runs over all connected clusters \(\mathbf{W}\). The coefficient \(\phi(\mathbf{W})\) is called the Ursell function, given as
\begin{equation}\label{eq:ursell}
  \phi(\mathbf{W}) =
  \begin{cases}
      1, &\begin{aligned}&\eta_{\mathbf{W}}=1 \end{aligned}\\
      \frac{1}{\mathbf{W}!} \sum_{\substack{C \in G_\mathbf{W} \\ C \text{ connected}}} (-1)^{|E(C)|},& \begin{aligned}&\eta_{\mathbf{W}} > 1\end{aligned}
  \end{cases}
\end{equation}
Where $C$ is a connected subgraph of the interaction graph $G_\mathbf{W}$ spanning all vertices, and $E(C)$ is the edge set of $C$.
\end{lemma}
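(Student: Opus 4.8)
The plan is to recognize the claimed identity as an instance of the standard cluster expansion for an \emph{abstract polymer model}, in which connected loops play the role of polymers, their loop corrections $Z_l$ play the role of polymer activities, and the incompatibility relation of Definition~\ref{def:incompatibility} furnishes the hard-core interaction. The whole argument then reduces to two steps: (i) rewriting $\ZZ(\tilde T)$ as a polymer partition function, and (ii) invoking the textbook logarithm formula for such a partition function and specializing its coefficients to the hard-core case.

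First I would rewrite the loop series of Lemma~\ref{lem:loop_expansion}. Under the normalization $Z^{(v)}=1$ we have $Z_0=1$, so $\ZZ(\tilde T) = 1 + \sum_{l\in\LL_G} Z_l$. Every generalized loop decomposes uniquely into its connected components, each of which is a connected loop in $\LL^c_G$, and distinct components are pairwise compatible since by definition they share no vertex or edge. The crucial observation is that the loop correction factorizes over components: because the projectors $\mathcal{P}^\perp$ act on disjoint edge sets while the self-consistent messages $\mu$ saturate all remaining edges, the contraction splits into a product, giving $Z_l = \prod_{c} Z_c$ over components $c$. Conversely, any compatible family $\Gamma\subset\LL^c_G$ assembles into a generalized loop. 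This bijection turns the loop series into the polymer partition function
\[
\ZZ(\tilde T) = \sum_{\substack{\Gamma\subset\LL^c_G \\ \Gamma\ \text{compatible}}} \prod_{\gamma\in\Gamma} Z_\gamma,
\]
where the empty family reproduces the vacuum term $1$.

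Next I would apply the fundamental identity of the abstract polymer model \cite{kotecky1986cluster}: the logarithm of such a partition function equals a sum over \emph{connected} clusters,
\[
\log \ZZ(\tilde T) = \sum_{\text{connected}\ \mathbf{W}} \phi(\mathbf{W})\, Z_{\mathbf{W}},
\]
with $Z_{\mathbf{W}}=\prod_i Z_{l_i}^{\eta_i}$ as in Eq.~(\ref{eq:cluster_contribution}). In the general Mayer expansion the Ursell coefficient is a sum over connected spanning subgraphs of the interaction graph weighted by a product of Mayer factors $e^{-U_{ll'}}-1$; for a hard-core interaction this factor equals $-1$ on every incompatible (or identical) pair and $0$ otherwise, so only subgraphs of $G_\mathbf{W}$ survive and the weight collapses to $(-1)^{|E(C)|}$. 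Grouping the $n_\mathbf{W}!$ labelings of the vertices of $G_\mathbf{W}$ into orbits under permutations of identical loops produces the symmetry factor $\mathbf{W}!=\prod_i\eta_i!$, reproducing exactly Eq.~(\ref{eq:ursell}), with the single-loop case $\eta_\mathbf{W}=1$ giving $\phi=1$ trivially. The stated identity for $\FF(\tilde T)$ then follows from its relation to $\log\ZZ(\tilde T)$, after tracking the overall sign in the definition $\FF=-\log\ZZ$.

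The main obstacle I anticipate is the factorization lemma $Z_l=\prod_c Z_c$ for a disconnected loop: one must verify carefully that inserting orthogonal projectors on the disjoint component edges, together with the self-consistency of the messages (which guarantees that the tensor network contracts to the vacuum value on the complementary edges), genuinely decouples the contraction into an independent product over components. Establishing this cleanly is what legitimizes reindexing the loop series as a hard-core polymer gas; once it is in place, the cluster-expansion machinery and the combinatorial reduction of the Ursell function to $(-1)^{|E(C)|}/\mathbf{W}!$ are standard, the only remaining care being the bookkeeping of multiplicities and symmetry factors in the passage from labeled vertices to the multiset cluster $\mathbf{W}$.
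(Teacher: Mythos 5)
Your proposal follows essentially the same route as the paper's own proof: reorganize the loop series into a hard-core polymer partition function over compatible connected loops (using the factorization of $Z_l$ over connected components under the BP normalization), then take the logarithm via the standard Mayer/cluster expansion of the abstract polymer model, with the Ursell function collapsing to $(-1)^{|E(C)|}/\mathbf{W}!$ for hard-core interactions. The only difference is that the paper writes out the exponential-formula derivation explicitly rather than citing it as a black box, and you correctly flag the component factorization of $Z_l$ as the step requiring care.
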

The proof of Lemma \ref{lem:connected_cluster} is rather technical so we defer it to the appendix. Since the expansion involves only connected clusters, an important question arises: how many connected clusters must be enumerated in the truncated sum? The answer depends fundamentally on the graph structure, as quantified by the following combinatorial bound.

\begin{lemma}
    Given any graph with $n$ vertices and with degree $\Delta$, the number of connected clusters with weight $\le m$ is upper-bounded by $n (\Delta + 2)^m$
\end{lemma}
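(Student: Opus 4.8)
The plan is to injectively encode each connected cluster of weight $\le m$ as a rooted word of length $m$ over an alphabet of size $\Delta+2$, reducing the count to the trivial enumeration $n\,(\Delta+2)^m$. The starting point is the observation that connectivity of the interaction graph $G_{\mathbf{W}}$ forces the union $H=\bigcup_i \mathrm{supp}(l_i)$ of the loop supports to be a \emph{connected} subgraph of $G$: each incompatibility edge of $G_{\mathbf{W}}$ corresponds to two loops sharing a vertex, each loop is itself connected, and a spanning tree of $G_{\mathbf{W}}$ therefore glues all supports into one connected piece. Since every edge of $H$ lies in at least one loop, $|E(H)|\le \sum_i \eta_i|l_i| = |\mathbf{W}|\le m$. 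Rooting $H$ at, say, the lexicographically least vertex of its support accounts for the prefactor $n$.

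I would then fix a canonical traversal that records, in order, how the cluster is assembled from its loops: a depth-first search of $G_{\mathbf{W}}$ orders the loops (with multiplicity) into a sequence in which each new loop is glued to the current position along a shared vertex, and within each loop a depth-first search orders its edges. The traversal is written so that \emph{each of the at most $m$ edge-steps} carries exactly one symbol, drawn either from the $\le\Delta$ directions ``advance along incident edge $j$'' or from two control symbols that double as markers for ``this edge closes the current loop, begin the next'' and ``this edge closes the cluster.'' Padding the resulting word to length exactly $m$ with the terminate symbol gives an injection of rooted clusters into length-$m$ words over an alphabet of size $\Delta+2$; hence at most $(\Delta+2)^m$ clusters share any given root, and summing over the $n$ roots yields $n\,(\Delta+2)^m$.

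The step I expect to be the main obstacle is verifying that this encoding is a genuine injection while holding the per-step branching to exactly $\Delta+2$. Three points need care. The traversal must recover the full multiset structure, distinguishing a single loop taken with multiplicity $\eta$ from $\eta$ distinct loops and reconstructing all multiplicities; this is read off from the loop-boundary control symbols followed by collating identical loops. Because the $l_i$ are \emph{generalized} loops --- connected subgraphs of minimum degree two, hence containing independent cycles and generally non-Eulerian --- the intra-loop search meets back-edges, and one must check that each is resolved by a direction already available among the $\le\Delta$ incident-edge symbols rather than costing extra alphabet. Finally, the gluing vertex between consecutive loops must be recoverable for free, which constrains the canonical traversal to always re-enter the next loop at the walker's current vertex. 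Once these are pinned down, tracking the constant to obtain precisely $\Delta+2$ (rather than $\Delta+O(1)$) is routine book-keeping.
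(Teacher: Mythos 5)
Your reduction is sound up to a point: connectivity of the interaction graph does force the union of the loop supports to be a connected subgraph of $G$ with at most $m$ edges, and rooting it accounts for the prefactor $n$. But the heart of your argument --- that a canonical traversal assigns exactly one symbol from an alphabet of size $\Delta+2$ to each of the at most $m$ edges and is injective --- is precisely where the proof is missing, and the three difficulties you flag at the end are not routine book-keeping; they are the proof. First, a depth-first exploration of a $k$-edge connected subgraph is a walk of length up to $2k$, not $k$: each tree edge of the search is traversed twice, and the backtracking pattern (a Dyck-type up/down sequence) carries information that $k$ direction symbols do not record. A generalized loop need only have minimum degree two, so it is in general not Eulerian and admits no closed walk using each edge exactly once; ``one symbol per edge'' is therefore not even a well-defined walk. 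Second, in the traversal of the interaction graph $G_{\mathbf{W}}$ the next loop must be glued at a shared vertex of some \emph{previously} placed loop, which need not be the walker's current position; specifying that attachment point costs roughly $\log$ of the size of the structure built so far per loop, which your alphabet does not pay for. Third, two incompatible loops may share several vertices and edges, and a single ``closing'' control symbol does not determine which edge closes a non-simple loop. As written, the map from clusters to words is neither well defined nor shown to be injective, so the bound does not follow.

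The paper's proof takes a genuinely different route that sidesteps the linearization problem entirely: it builds a layered auxiliary graph $G_e$ (one copy of $G$ per layer, plus an edge joining consecutive copies of the same vertex), which has degree $\Delta+2$, and maps each connected cluster supported on $v$ to a connected subgraph of $G_e$ containing $v_1$ by placing each successive loop one layer above the highest layer already containing a loop incompatible with it. The count is then delegated to a bound on the number of connected $m$-edge subgraphs through a fixed vertex of a degree-$(\Delta+2)$ graph; the $+2$ is the two vertical neighbours, not two control symbols. If you want to salvage your word-encoding strategy, you would in effect have to prove that subgraph-counting bound yourself, and the standard spanning-subtree argument yields a per-edge factor closer to $e\Delta$ than to $\Delta$; any argument aiming for exactly $(\Delta+2)^m$ must therefore do more than a per-edge direction code, e.g.\ exploit the minimum-degree-two structure of generalized loops or accept a weaker constant (which would only shift the threshold $c_0$ by an additive constant, but would not give the lemma as stated).
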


We prove this bound in Appendix \ref{app:polymer}. This bound reveals that there are $n \Delta^{O(m)}$ connected clusters per site, making the enumeration computationally tractable for moderate values of $m$. Given this bound, we are then motivated to \textit{truncate} the cluster series at a finite cluster weight $m$ to get the truncated free energy \(\tilde{F}_m\). 

\begin{defn}
[Truncated cluster expansion] The truncated partition function \(\tilde{F}_m\) is defined as
\begin{equation}
  \tilde{F}_m = \sum_{\substack{
    \text{connected }\, \mathbf{W} \\
    |\mathbf{W}| \le m
  }} \phi(\mathbf{W}) Z_{\mathbf{W}}.
\end{equation}
\end{defn}

We will use the \(\tilde{F}_m\) to approximate $\FF(\tilde{T})$. Our main technical result is to show that when the loop contribution decays exponentially with their weight with an exponent above a constant threshold, then the cluster expansion converges and \(\tilde{F}_m\) gives a good approximation to $\FF(\tilde{T})$.

\begin{theorem}[Convergence of the cluster expansion]\label{thm:convergence_informal}
(Informal) Given a tensor network with degree $\Delta$ and normalized by the BP fixed point. Assume there exists a constant \(c>c_0 = \Theta(\log(\Delta))\) such that
\begin{equation}
  |Z_l|\le e^{-c |l|}
\end{equation}
Then the series for \(\mathcal{F}\) converges absolutely, and the error in truncating the series at order \(m\) is bounded by
\begin{equation}
  \left| \mathcal{F} - \tilde{F}_m \right| = O(n e^{-d(m+1)})
\end{equation}
Where $d = c - c_0$.
\end{theorem}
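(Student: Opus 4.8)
The plan is to recognize the loop series of Lemma~\ref{lem:loop_expansion} as the partition function of an abstract polymer model and then invoke the Kotecký--Preiss convergence criterion. Concretely, I would take the \emph{polymers} to be the connected loops $\LL^c_G$, declare two polymers compatible exactly when the underlying loops are compatible (Definition~\ref{def:incompatibility}), and assign each polymer the activity $Z_l$. The first step is to observe that a disconnected generalized loop factorizes into its connected components and that the loop correction is multiplicative over these components, since the projectors $\mathcal{P}^\perp$ and the tensors are all local; together with the normalization $Z_0=1$ this gives
\begin{equation}
  \ZZ(\tilde{T}) = \sum_{\substack{\Gamma \subset \LL^c_G \\ \Gamma \text{ compatible}}} \prod_{l \in \Gamma} Z_l,
\end{equation}
which is exactly the polymer partition function $\Xi$. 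Lemma~\ref{lem:connected_cluster} then identifies $\FF(\tilde{T})$ (up to the overall sign convention) with the cluster series $\log\Xi = \sum_{\text{connected }\mathbf{W}} \phi(\mathbf{W}) Z_{\mathbf{W}}$, so the whole problem reduces to controlling this series by a black-box convergence theorem.

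The second step is to verify the Kotecký--Preiss criterion with the linear test functions $a(l) = \beta|l|$ and decay weight $\epsilon|l|$. Fixing a polymer $\gamma$, every incompatible $l'$ must share a vertex with $\gamma$, so
\begin{equation}
  \sum_{l' \not\sim \gamma} |Z_{l'}|\, e^{(\beta+\epsilon)|l'|}
  \;\le\; \sum_{v \in V(\gamma)} \sum_{l' \ni v} e^{-c|l'|}\, e^{(\beta+\epsilon)|l'|}
  \;\le\; |\gamma|\sum_{k\ge g} N_k\, e^{(\beta+\epsilon-c)k},
\end{equation}
where $N_k$ is the number of connected loops of weight $k$ through a fixed vertex, $g$ the minimal loop weight, and I have used $|V(\gamma)|\le|\gamma|$. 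The key combinatorial input, which is the same subgraph-counting bound underlying Lemma~\ref{lem:connected_cluster}, is $N_k \le B^k$ with $\log B = \Theta(\log\Delta) =: c_0$; this is precisely where the threshold enters. Writing $c = c_0 + d$, the exponent becomes $(\beta+\epsilon-d)k$, so for any $\beta+\epsilon < d$ the geometric series converges and its value can be forced to be $\le\beta$ by choosing the parameters appropriately, which establishes the criterion and hence absolute convergence.

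Finally, to obtain the truncation estimate I would use the companion Kotecký--Preiss bound $\sum_{\mathbf{W}\ni v} |\phi(\mathbf{W})|\,|Z_{\mathbf{W}}|\, e^{\epsilon|\mathbf{W}|} \le \mathrm{const}$, valid per vertex and independent of system size. Since every connected cluster contains at least one vertex, pinning to that vertex and extracting the factor $e^{-\epsilon|\mathbf{W}|}\le e^{-\epsilon(m+1)}$ on the tail $|\mathbf{W}|>m$ yields
\begin{equation}
  \bigl|\FF - \tilde{F}_m\bigr|
  \le \sum_{\substack{\text{connected }\mathbf{W}\\ |\mathbf{W}|>m}} |\phi(\mathbf{W})|\,|Z_{\mathbf{W}}|
  \le e^{-\epsilon(m+1)} \sum_{v\in V}\sum_{\mathbf{W}\ni v} |\phi(\mathbf{W})|\,|Z_{\mathbf{W}}|\, e^{\epsilon|\mathbf{W}|}
  = O\!\bigl(n\, e^{-\epsilon(m+1)}\bigr),
\end{equation}
and letting $\epsilon \uparrow d$ produces the stated bound. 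The main obstacle is the quantitative combinatorics: securing the counting bound $N_k \le B^k$ with the sharp base $\log B = \Theta(\log\Delta)$ and then threading it through the Kotecký--Preiss inequality so that the geometric sum is genuinely $\le\beta$ while the decay exponent $\epsilon$ can be pushed arbitrarily close to $d$. By contrast, the locality/factorization of $Z_l$ over connected components, though a prerequisite, is comparatively routine.
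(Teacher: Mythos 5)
Your proposal is correct and follows essentially the same route as the paper: reorganize the loop series into an abstract polymer partition function over connected loops, invoke the connected-cluster (Ursell) expansion for $\log\ZZ$, verify the Kotecký--Preiss condition with a linear test function using the $(\Delta-1)^k$-type bound on connected subgraphs through a fixed vertex, and obtain the truncation error by pinning each cluster to a vertex and summing the per-vertex tail over all $n$ sites. The only differences are cosmetic constants (the paper uses $|V(l)|\le|l|/2$ and fixes $a(l)=|l|/2$, yielding the explicit threshold $c_0=\log(2(\Delta-1))+\tfrac12$), which are absorbed by the $\Theta(\log\Delta)$ in the informal statement.
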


We prove our main result in Appendix \ref{app:polymer} which is a direct application of the Kotechy-Preiss condition \cite{kotecky1986cluster}. If we denote the true free energy \textit{density} as $f = \FF/n$ and the weight$-m$ cluster approximation density as $\tilde{f}_m = \tilde{F}_m/n$, then we have that sufficient loop decay ensures,
\begin{equation}
  \left| f - \tilde{f}_m\right| = O(e^{-d(m+1)})
\end{equation}
We discuss the implications of our main theorem. At $m=0$, our theorem tells that BP approximates the free energy density up to a constant additive error, under the stated assumption. Further, cluster expansion improves this error exponentially fast in $m$. Hence, to get to an error $\epsilon$, one needs to enumerate clusters to order $m = \frac{1}{d}\log{\frac{1}{\epsilon}}$. Moreover, all connected clusters with weight $\le m$ can be enumerated in $n\, e^{O(m)} = O(n/\epsilon^{1/d})$ time. Thus, the time complexity is $\poly(n)$ to get to an inverse polynomial error in free energy density.

Finally, in certain cases such as simulating quantum dynamics, the tensor network contraction $\mathcal{Z}(\mathcal{T})$ itself is the physical observable, and we are interested in quantifying its error. An \emph{additive} error of $\epsilon$ in $\mathcal{F}(\mathcal{T})$ corresponds to a \emph{multiplicative} error of $\Theta(\epsilon)$ in $\mathcal{Z}(\mathcal{T})$. Since these observables are often of order $O(1)$, this typically implies an $O(1)$ additive error. However, when the observables become exponentially small, an $O(1)$ additive error is no longer meaningful. In contrast, a $\Theta(\epsilon)$ multiplicative error ensures that the additive error bar shrinks proportionally as the observable decreases. This makes the cluster expansion particularly favorable in such regimes.

\subsection{Toy example}\label{sec:toy}
To illustrate the idea behind the cluster expansion and compare it to the loop expansion, we consider a toy example. Consider a tensor network on a one-dimensional ring. The only generalized loop is the entire ring $l$, and the loop contribution is $Z_l$. Suppose we have normalized the tensor network by the BP fixed point, so the BP contribution is one. Then, the loop expansion gives
\begin{equation}
    \ZZ = 1 + Z_l
\end{equation}
and the free energy is
\begin{equation}
    \FF = \log(1 + Z_l)
\end{equation}
On the other hand, the only possible clusters are $\{(l,1)\}, \{(l,2)\}, \ldots$, namely the same loop repeated multiple times. In this case, all clusters of weight $m$ are $\{(l,m)\}$. 
Therefore, $\mathbf{W}!=m!$. The part that sums over connected spanning graphs evaluates to $ (-1)^{m+1} (m-1)!$ which we show in Appendix \ref{app:ursell_toy}. Therefore,  the Ursell function can be computed to be $\phi(\{(l,m)\}) = \frac{(-1)^{m+1}}{m}$. The cluster expansion gives
\begin{equation}
    \FF = \sum_{k=1}^\infty \frac{(-1)^{k+1}}{k} Z_l^k
\end{equation}
This is exactly the Taylor expansion of $\log(1+Z_l)$, which converges for $|Z_l|<1$.

We note the in this small example, loop expansion converges in the first order, while the cluster expansion needs to go to higher orders. However, when $|Z_l|$ is small, both methods agree on the leading order. As we will see later, when contracting large tensor networks, we expect cluster expansion to converge faster. We also note that while the computation of the Ursell function is daunting here, it is drastically simplified in large tensor networks since one typically does not need to handle clusters with many loops. In fact, in the numerical work in the next section, the Ursell function can be brute-force enumerated in the order we truncate.

Finally, this toy example also illustrates the difference between the linked cluster expansion and our cluster expansion. The linked cluster expansion produces $\log(1+Z_l)$ in one iteration \cite{welling2012cluster}, whereas our cluster expansion is an infinite series expansion.

\section{Algorithms}\label{sec:algorithm}

In this section, we present an overview of the algorithmic procedure for computing the cluster expansion of generic tensor networks. Figure~\ref{fig:pseudocode} provides a pseudocode summary. Suppose we are given a family of tensor networks $\{\mathcal T\}$ defined on a common graph $G$, and our goal is to contract each network. The algorithm takes as input the set $\{\mathcal T\}$ and a maximal cluster weight $m$, and outputs the truncated cluster expansion $\tilde{F}_m$ for each $\mathcal{T}$.

\subsection{Cluster Enumeration}
The first step is to enumerate all connected clusters with weight $\le m$ and save them. This step is computationally expensive, as its complexity grows exponentially with $m$. However, for a given graph $G$, this computation only needs to be performed once. The cluster enumeration algorithm is intricate; therefore, we provide a detailed discussion in Appendix \ref{app:algorithm}. Here, we summarize the main steps:
\begin{enumerate}
    \item For each vertex, enumerate all connected loops with weight $\le m$ supported on that vertex.
    \item Repeat over all vertices to obtain a list of connected loops, then deduplicate to remove redundancies.
    \item For each vertex, enumerate all connected clusters with weight $\le m$ supported on that vertex, using the list of connected loops.
    \item Repeat over all vertices to obtain a list of connected clusters, then deduplicate.
\end{enumerate}
For step 1, we use a depth- or breadth-first search algorithm to ``grow" a connected subgraph from each vertex, recording the subgraph only when it forms a generalized loop (see Definition \ref{def:loop}). Repeating this process over all vertices yields a list of connected loops. Since a single loop may be supported on multiple vertices, we perform de-duplication to remove redundancies. 

Step 3 follows a similar approach: starting from each vertex, we ``grow" connected clusters using the list of loops, and again deduplicate after repeating over all vertices. Finally, we repeat Step 3 for each vertex and deduplicate to obtain the final list of connected clusters.

Steps 1 and 3 have a runtime of $O(\exp(m))$, while steps 2 and 4 add an additional factor of $n$ \footnote{Technically, deduplication has a runtime of $O(n^2)$, but its constant factor is negligible.}. Thus, the total runtime is $O(n \exp(m))$. In practice, steps 1 and 2 are the most time-consuming. Appendix \ref{app:algorithm} discusses strategies to improve runtime, such as exploiting symmetries (such as translational symmetry) of the graph $G$.

\subsection{Message Passing and Normalization}
After enumerating all connected clusters, we proceed to compute the cluster expansion for each tensor network $\mathcal{T}$. The next step is to run BP on $\mathcal{T}$ to obtain the (approximate) fixed-point messages $\{\mu_{v \to s}\}$. This is achieved by iteratively applying the following update rule until convergence:
\begin{equation}
        \mu_{v\to s} \rightarrow \left(\underset{r\in \NN(v)/\{s\}}{\otimes}\mu_{r\to v} \right)\star T_v
\end{equation}
Schematically,
\vspace{-0.2cm}
\begin{figure}[H]
    \centering
    \includegraphics[width=5cm, keepaspectratio]{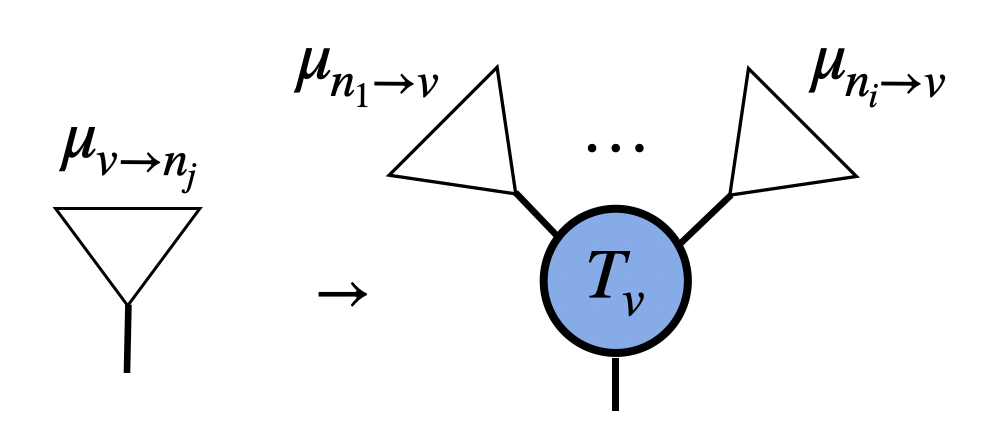} 
\end{figure}
\vspace{-0.2cm}

For numerical stability, we normalize the message vector by its two-norm after each update. This does not affect the result in the end as any constant factor will be cancelled out when dividing by $I_{vw}$. Message passing is typically efficient, as belief propagation often converges rapidly. However, tensor networks with sign structures may either fail to converge or converge slowly. Additionally, message passing can admit multiple fixed points, some of which may yield poor approximations.

Empirically, we find that the following strategies are helpful: (1) initializing messages randomly (in case of symmetry breaking), (2) \textit{damping} the message passing—--i.e., updating messages as a convex combination of the old and new values, and (3) adding a small amount of noise to the messages at each iteration. It is imperative to ensure that the self-consistency condition is satisfied to some fixed error $\varepsilon$ at each vertex, that is, $\forall v\in V, s\in \NN(v)$ we ensure,

\begin{equation}\label{eq:approxselfcons}
        \left\|\left(\underset{r\in \NN(v)/\{s\}}{\otimes}\mu_{r\to v} \right)\star T_v - \mu_{v\to s} \right\|_2 < \varepsilon
\end{equation}

Once the messages have converged, we compute the BP free energy $F_{0}$ and normalize the tensors $\mathcal{T}$ to $\tilde{\mathcal T}$ as in Eq.~\ref{eq:normalized_tensor}. This normalization introduces a constant offset in the free energy, so we add $F_0$ back to the final result.

\subsection{Computing Cluster Contribution and Final Result}
The final step is to load the list of connected clusters and compute their contributions $Z_{\mathbf{W}}$ as described in Eq.~\ref{eq:cluster_contribution}. This step is another computational bottleneck, since the number of clusters grows exponentially with $m$. However, it is highly parallelizable, as the contribution of each cluster can be computed independently. For graphs with significant symmetries (e.g., square lattices), many loops share the same shape, allowing them to be batched together for efficient graphical processing unit (GPU) acceleration.

Each cluster requires contracting relatively small loop tensors. When the bond dimension is small, this is efficient. However, when the bond dimension is large (which could happen in the context of simulating quantum dynamics), optimizing the contraction order is necessary. The Ursell function $\phi(\mathbf{W})$, defined in Eq.~\ref{eq:ursell}, can in principle be precomputed, but in practice it is often straightforward to evaluate. For example, in PEPS, the Ursell function is always $1$, $-1$, or $-1/2$ for weights up to twelve. Finally, we sum over all connected clusters with weight $\le m$ to obtain the truncated cluster expansion. The final result is given by $F_0 + \sum_{\mathbf{W}} \phi(\mathbf{W})Z_{\mathbf{W}}$.

\begin{figure} 
    \includegraphics[width=\linewidth]{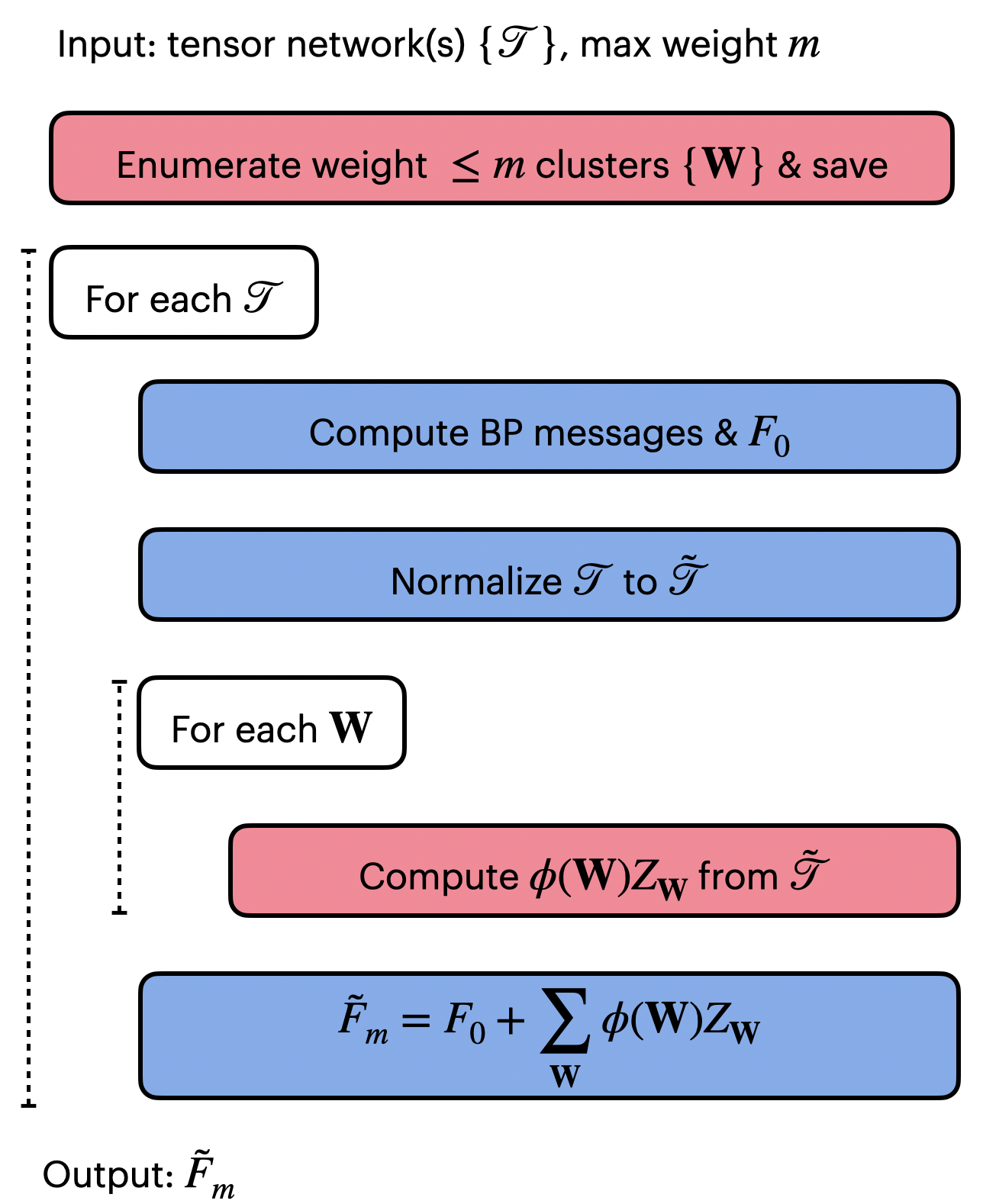}  
    \caption{\textbf{Pseudocode of Cluster Expansion}. Computationally expensive steps are colored in red. When there are more than one tensor networks, we assume they are defined on the same graph so that they share the same clusters.}
    \label{fig:pseudocode}  
\end{figure}
\section{Benchmark: 2D Ising Model}\label{sec:ising}

To demonstrate the efficacy of tensor network belief propagation and validate the cluster expansion methodology, we apply the framework to the paradigmatic two-dimensional classical Ising model. This system serves as an ideal testbed for several reasons: it possesses a known exact solution due to Onsager~\cite{onsager1944crystal}, exhibits a second-order phase transition with well-characterized critical behavior, and the partition function can be naturally formulated as a tensor network with constant bond dimension. The numerical calculations which follow have been performed using the \texttt{ITensor} \cite{fishman2022itensor} library.

We consider the classical 2D Ising model on a $L\times L$ square lattice with $N = L^2$ spins and nearest-neighbor interactions, as described by the Hamiltonian, 
\begin{equation}
    H[\{s_i\}] = -J\sum_{\langle ij\rangle} s_i s_j,
\end{equation}
where $s_i \in \{\pm 1\}$ denotes the spin variable at site $i$, $J > 0$ is the ferromagnetic coupling strength, and the sum runs over all nearest-neighbor pairs $\langle ij\rangle$ on a square lattice with periodic boundary conditions.  The partition function is given by
\[
Z = \sum_{\{s_i\}} \exp\left( \beta \sum_{\langle i,j \rangle} s_i s_j \right),
\]
where $\beta = 1/(k_B T)$ is the inverse temperature.

To represent this partition function as a tensor network, we use a factorization of the Boltzmann weights. For each nearest-neighbor interaction, we use the identity
\[
e^{\beta s_i s_j} = \sum_{x=0}^1 w(s_i, x, \beta) \, w(s_j, x, \beta),
\]
where the function $w(s, x, \beta)$ is defined as:
\[
w(s, x, \beta) =
\begin{cases}
\sqrt{\cosh(\beta)}, & x = 0, \\
\sqrt{\cosh(\beta)} \cdot s \cdot \sqrt{\tanh(\beta)}, & x = 1.
\end{cases}
\]

We then define a rank-4 local tensor $T_{x_1 x_2 x_3 x_4}$ at each lattice site, corresponding to the four directions (up, down, left, right), as follows:
\[
T_{x_1 x_2 x_3 x_4} = \sum_{s = \pm 1} w(s, x_1, \beta) \cdot w(s, x_2, \beta) \cdot w(s, x_3, \beta) \cdot w(s, x_4, \beta).
\]

The full partition function is then given by contracting these local tensors according to the 2D square lattice geometry:
\[
Z = \sum_{\{x_{i,j}\}} \prod_{\text{sites } (i,j)} T^{(i,j)}_{x_{i,j}^{(u)} x_{i,j}^{(d)} x_{i,j}^{(l)} x_{i,j}^{(r)}},
\]
where each index $x_{i,j}^{(\cdot)}$ is shared with the corresponding neighboring site, and the sum is over all internal bond indices $x_{i,j}^{(\cdot)} \in \{0, 1\}$, resulting in a bond dimension $\chi = 2$. The tensor network contraction of this ensemble produces the full partition function $\ZZ$, from which thermodynamic quantities such as the free energy density $f = -\beta^{-1} \ln(\ZZ)/N$ can be extracted. 

\subsection{BP Vacuum}
A fundamental question underlying any approximation scheme is understanding the regimes where it provides reliable results. For BP on tensor networks, this translates to identifying the physical conditions under which the BP vacuum accurately captures the system's behavior. Since BP implements a mean-field treatment—--approximating each site's environment with a rank-one tensor--—we expect it to perform well when mean-field assumptions are valid: deep within a phase and away from critical points.

Figure~\ref{fig:fig_ising}(a) tests this expectation by comparing the BP vacuum solution with Onsager's exact result for the free energy density $f(\beta)$ across the full temperature range. The BP approximation indeed demonstrates remarkable accuracy in both the high-temperature paramagnetic phase ($\beta \ll \beta_c$) and the low-temperature ferromagnetic phase ($\beta \gg \beta_c$), where deviations from the exact solution remain modest. However, significant discrepancies emerge in the critical region $\beta \in [0.25, 0.45]$ encompassing the phase transition, precisely where mean-field theory is expected to break down due to a diverging correlation length and enhanced fluctuations.

This behavior can be further understood through the theoretical foundations of the BP approximation. The BP vacuum solution effectively implements the Bethe approximation, treating the square lattice as a locally tree-like structure by neglecting loop correlations. This mean-field-like treatment captures the essential physics away from criticality, where local correlations dominate, but becomes increasingly inaccurate near the phase transition where long-range fluctuations and thereby loop effects become significant.

For the Ising model on a Bethe lattice with coordination number $z$, the critical point occurs at $\beta_c^{(z)} = 0.5 \log{\frac{z}{z-2}}$ \cite{baxter2016exactly}. Since the square lattice has $z=4$, the BP critical point is located at $\beta_{\text{BP}} = \frac{\log(2)}{2} \approx 0.347$, which we can verify through the divergence of the specific heat computed from the BP vacuum free energy. This BP critical point lies below the true Onsager critical point $\beta_c \approx 0.441$, explaining why BP accuracy deteriorates well before the actual phase transition.

\begin{figure*}[t]   
    \centering
    \includegraphics[width=1.0\linewidth]{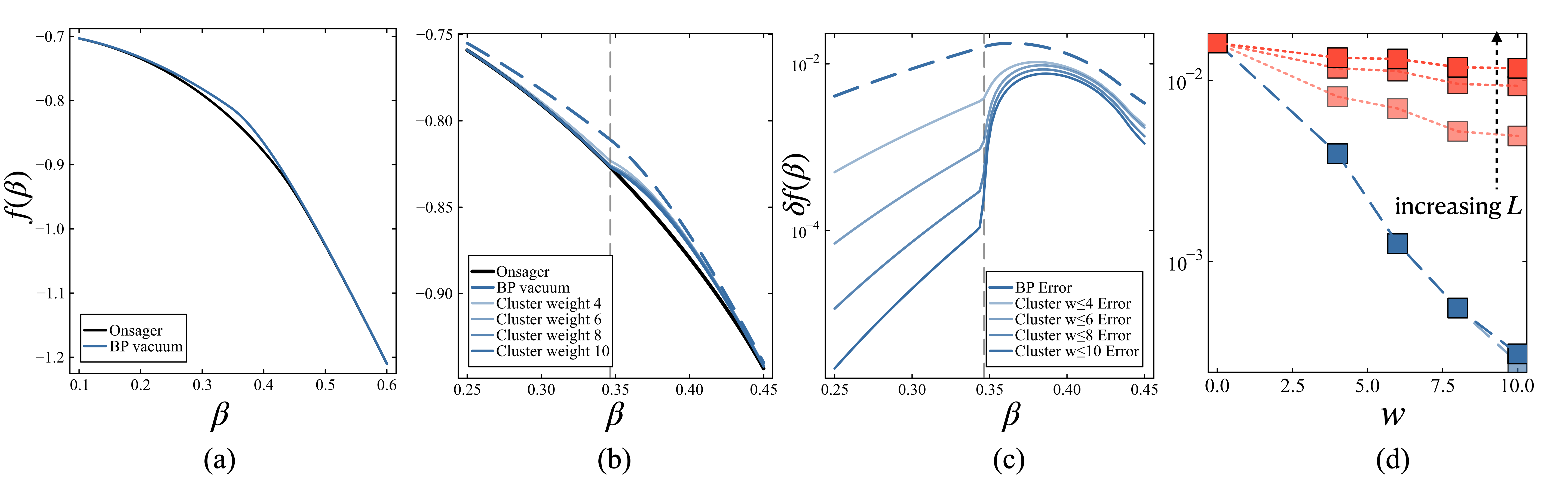}  
   \caption{\textbf{Ising Free Energy.} (a) Comparision of BP vacuum solution to the free energy density and the Onsager exact solution. (b) Cluster corrections to $f(\beta)$ in the critical region $\beta\in [0.25,0.45]$, with the BP critical point identified at $\beta_{\text{BP}} = \ln{2} /2$. (c) Free energy density error $\delta f(\beta)$ for BP vacuum (dashed) and different cluster corrections of weight $w\in\{4,6,8,10\}$. (d) $\delta f(\beta=\beta_{\text{BP}})$ for cluster corrections (dashed, blue) and loop corrections (dotted, red) as a function of cluster (loop) weight $w$ for system sizes $L \in \{10,20,30\}$. Curves of cluster expansion at different $L$ collapses because cluster expansion is automatically in the thermodynamic limit.}
    \label{fig:fig_ising}  
\end{figure*}

\subsection{Cluster Expansion}

To systematically improve upon the BP approximation, we implement the cluster expansion formalism by incorporating cluster corrections of increasing weight to the BP vacuum. Figure~\ref{fig:fig_ising}(b) presents a detailed view of free energy density in the critical region $\beta \in [0.25, 0.45]$, showing the progressive convergence toward the exact solution as cluster corrections of weight $w \in \{4, 6, 8, 10\}$ are added to the BP vacuum.

The cluster corrections exhibit distinct convergence behavior across different temperature regimes. In the high-temperature paramagnetic phase ($\beta < \beta_{\text{BP}}$), the corrections rapidly converge to the exact solution with relatively modest contributions from higher-order terms. However, in the low-temperature ferromagnetic phase ($\beta > \beta_{\text{BP}}$), convergence becomes markedly slower, requiring contributions from increasingly large clusters to achieve comparable accuracy. This difference reflects the degeneracy caused by the spontaneous symmetry breaking in the ferromagnetic phase. In the intermediate region, where the BP theory enters low-temperature but the 2D Ising model is still high-temperature, the choice of fixed points alters the convergence, which we detail in SM~Sec.~\ref{sec:add_num}.

The convergence properties of the loop expansion are further illuminated in Figure~\ref{fig:fig_ising}(c), which displays the free energy density error $\delta f(\beta) = f_{\text{approx}}(\beta) - f_{\text{exact}}(\beta)$ for the BP vacuum (dashed) and successive cluster corrections of weight $w \in \{4,6,8,10\}$. One notes the exponentially faster convergence as more cluster contributions are added for $\beta \leq \beta_{\text{BP}}$, and the bottleneck in convergence for $\beta > \beta_{\text{BP}}$.

\subsection{Convergence: Clusters v.s. Loops}

We now present the central numerical evidence of this work: a systematic comparison between our cluster expansion method and the `traditional' loop series expansion, revealing fundamental differences in how the algorithms scale. Figure~\ref{fig:fig_ising}(d) compares the approximation errors of cluster corrections (dashed, blue) and loop corrections (dotted, red) as functions of correction weight $w \in \{4,6,8,10\}$ for multiple system sizes $L \in \{10,20,30\}$.

The results reveal that while the cluster expansion exhibits robust exponential convergence that remains stable across all system sizes, the loop series expansion suffers from fundamental instabilities: (i) significantly slower convergence for any fixed system size, and (ii) divergent behavior as the system size increases, rendering it unsuitable for thermodynamic calculations.

This pathological behavior of loop expansions has a clear mathematical origin. Consider the BP vacuum contribution $Z_0 = z_0^N$ and a normalized weight-$w$ loop contribution $Z_w = O(1)$ that is intensive in the system size (normalized). The loop series expansion computes the free energy density as 
\begin{equation}
    \frac{1}{N}\log[Z_0(1 + NZ_w)] = \log z_0 + \frac{1}{N}\log(1 + NZ_w)
\end{equation}
In the thermodynamic limit $N \to \infty$, the correction term $\frac{1}{N}\log(1 + NZ_w) \to 0$, causing the loop contributions to vanish and negating any systematic improvement.

In contrast, our cluster expansion computes the free energy density as
\begin{equation}
    \frac{1}{N}[\log Z_0 + NZ_w] = \log z_0 + Z_w
\end{equation}
This is automatically in the thermodynamic limit in the sense that the estimated free energy density does not depend on the system size. This fundamental difference ensures that cluster corrections provide stable, size-independent improvements to the BP approximation, establishing the theoretical superiority of our tensor network-based approach for systematic corrections to belief propagation.

\subsection{Loop Contribution Analysis and Convergence Properties}

A fundamental question in the application of BP concerns the identification of parameter (temperature) regimes where such corrections on top of BP vacuum become most significant, thereby delineating the domains of validity for the BP vacuum approximation. To address this question, we analyze the average normalized loop contributions $Z_w(\beta)$ as functions of inverse temperature for loops of varying weight $w$.

Figure~\ref{fig:fig_ising_loop}(a) presents the temperature dependence of average loop contributions $Z_w$ for $w \in \{4, 6, 8, 10\}$. One notes that for all loop weights examined, the contributions exhibit a maxima precisely at the BP critical point $\beta_{\text{BP}} = \ln{2}/2$. This behavior provides compelling evidence that loop effects become most significant exactly where the BP approximation itself becomes critical, confirming the physical intuition that the breakdown of the tree-like approximation coincides with the emergence of strong loop correlations in the BP framework.

The observed peak structure has implications for the practical application of systematic corrections to belief propagation. Away from the BP critical point--—particularly in the high-temperature paramagnetic phase and the deep low-temperature ferromagnetic phase--—loop contributions remain relatively modest, indicating that the BP vacuum provides a robust zeroth-order approximation in thermodynamically stable phases. However, in the vicinity of $\beta_{\text{BP}}$, the amplification of loop effects signals the critical need for systematic inclusion of higher-order corrections to achieve quantitative accuracy. As we have demonstrated, this requirement is optimally addressed by the cluster expansion method, which exhibits exponential convergence and thermodynamic stability in precisely this challenging regime.

Equally crucial for validating our theoretical framework is examining the exponential decay rate of loop contributions $Z_w$ with increasing weight $w$—--this decay condition serves as the sufficient condition for convergence of our cluster expansion. Figure~\ref{fig:fig_ising_loop}(b) investigates this by analyzing the decay of loop contributions $Z_w(\beta)$ as a function of loop weight $w$ at three representative temperatures: the high-temperature phase ($\beta = 0.2$), the low-temperature phase ($\beta = 0.5$), and the BP critical point ($\beta_{\text{BP}} = \ln{2}/2$).

The results demonstrate that loop contributions decay exponentially with increasing loop weight across all temperature regimes, providing the essential sufficient condition for convergence of our cluster expansion series. In the high-temperature phase, the exponential decay is rapid and well-controlled, ensuring fast convergence reminiscent of traditional high-temperature expansions. However, the decay rate becomes notably slower in the ferromagnetic phase--—which could be related to spontaneous symmetry breaking. While loop magnitudes remain small in the ferromagnetic phase (confirming that the BP vacuum provides a good approximation there), the slower decay implies that cluster convergence becomes more challenging as one moves deeper into the ordered phase.

\begin{figure}[t]   
    \centering
    \includegraphics[width=1.0\linewidth]{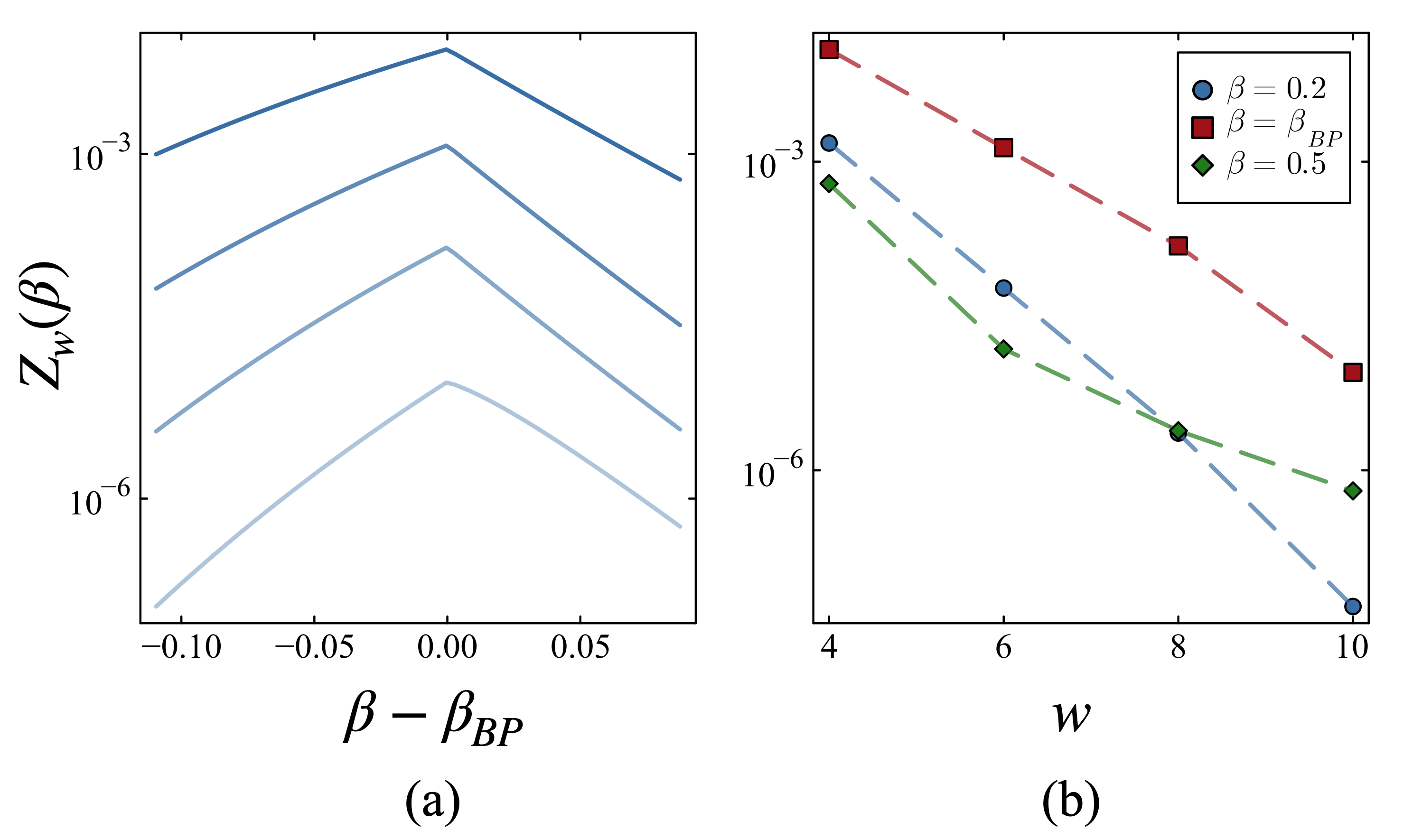}  
    \caption{\textbf{Ising Loop Decay}. (a) Average loop contributions $Z_w(\beta)$ for $w \in \{4,6,8,10\}$ showing a peak at the BP critical $\beta_{\text{BP}}$. (b) Decay of average loop contributions as a function of loop weight $w$ in the low-temperature phase $\beta = 0.5$, high-temperature $\beta = 0.2$ and critical point $\beta_{\text{BP}} = \ln{2}/2$.}
    \label{fig:fig_ising_loop}  
\end{figure}

\section{Discussion}
\label{sec:discussion}
We have presented a systematic theory of belief propagation for approximate tensor network contractions. Our main contribution is the construction of a cluster expansion that systematically improves upon the BP approximation. We rigorously prove that the cluster expansion converges exponentially fast if the loop contributions decay exponentially with a sufficiently large exponent. This result resolves two main challenges of BP: (1) it explains when BP provides a good approximation to the ground truth and (2) supplies an error estimate controlled by loop contributions. As a by product, it also yields a polynomial-time algorithm that systematically improves the BP result. 

To bring this technique into practice, we present a detailed and optimized algorithmic procedure for computing the cluster expansion. We benchmark our algorithm against both BP and the loop series expansion in the two-dimensional Ising model. Our results show that while BP deviates from the exact solution near the critical point, the cluster expansion yields significant improvements. Moreover, we demonstrate, both numerically and analytically, that the loop expansion fails to correct BP in the thermodynamic limit. Notably, this happens even at leading order before the onset of combinatorial growth.

While the cluster expansion does not converge in adversarial cases (otherwise there will be severe complexity-theoretic consequences), we believe that it should converge in many physically-relevant setups. In particular, the convergence of the cluster expansion reflects that the system is short-range correlated. Therefore, incorporating small clusters would already account for most of the correlations. This idea is not new in the literature; many earlier works have exploited short-ranged correlations to simply tensor network contractions, for example see \cite{lubasch2014algorithms,lubasch2014unifying}. An important theoretical question is to understand what kind of tensor networks admit a good BP approximation with a convergent cluster expansion.

Our work opens several avenues for future applications. First, BP is widely used in decoding classical and quantum error-correcting codes~\cite{mezard2009information,banchi2023generalization,mceliece1998turbo}. It would be interesting to explore the application of our method to improve the performance of BP-based decoders and provide rigorous error estimates. In particular, it is known that naive BP fails to give a threshold in quantum-LDPC codes because of the degeneracy problem~\cite{poulin2008iterative}. Since the cluster expansion captures the short-range correlations omitted by the BP approximation, it may help restore a threshold, if not simply improving the decoding performance, which we investigate in a forthcoming work \cite{BPTN_decoding}.

Second, many classical simulations of quantum systems and quantum dynamics heavily involve tensor networks, and more recently BP and its combination with other techniques~\cite{BP_gauging} in studying quantum dynamics. It would be interesting to explore the application of our method to improve the accuracy of these simulations. In particular, given well established competing methods such as TEBD, it is important to understand the regimes where BP is advantageous and can be incorporated into the simulators toolbox.

Finally, it would be interesting to extend our method to other tensor network geometries, such as higher-dimensional regular lattices or random expander graphs, where many conventional approaches break down. Belief propagation is particularly well-suited to these settings, as it is one of the few techniques that can handle arbitrary geometries while remaining computationally efficient. Moreover, BP is known to perform exceptionally well on locally tree-like structures, making it a promising tool for studying complex tensor networks beyond standard low-dimensional lattices.
\begin{acknowledgments}
We thank Hongkun Chen, Rhine Samajdar, David Huse, Sarang Gopalakrishnan, Dmitry Abanin, Joseph Tindall, Will Staples, and Dries Sels for helpful discussions and collaborations on related projects. We are particularly indebted to Grace Sommers for extensive discussions and cross-checking some of our numerical results. Y.F.Z acknowledges support from NSF QuSEC-TAQS OSI 2326767. The numerical simulations presented in this article were performed on computational resources managed and supported by Princeton University’s Research Computing. Parhey's company is also acknowledged.
\end{acknowledgments}

\section*{Note added}
After the completion of this work, a related work employing cluster expansion on tensor networks was posted \cite{gray2025tensor}. See Section \ref{sec:comparison} for detailed discussions.
\newpage 
\nocite{*}
\bibliographystyle{unsrt}
\bibliography{main}

\onecolumngrid
\renewcommand{\figurename}{SUPP FIG.}
\setcounter{figure}{0}    
\setcounter{section}{0}    

\newpage
\section*{Supplementary Material}

\section{Convergence via the Abstract Polymer Model}\label{app:polymer}

In this section, we establish the convergence of the loop expansion for the logarithm of the tensor network contraction $\log(\ZZ)$. The proof is based on the abstract polymer model which can be considered as a generalization of the cluster expansion in statistical mechanics. The convergence follows from the Kotecký–Preiss (KP) criterion~\cite{kotecky1986cluster}. Without loss of generality, consider the normalization wherein the BP vacuum contribution is unity. In other words, $\left [ \underset{w \in N(v)}{\bigotimes}\mu_{wv}\right] \star T_v = 1$ for each vertex $v$.

\subsection{Cluster Expansion}

We start from the loop expansion in Lemma \ref{lem:loop_expansion}. If a loop $l$ is a union of two disconnected loops $l_1$ amd $l_2$, then the loop contribution $Z_l$ factorizes into $Z_{l_1} \times Z_{l_2}$ (note that we normalize the BP contribution to one). Therefore, the loop series expansion for the normalized tensor network takes the following alternate form, 

\begin{prop}[Loop expansion reorganized]
The tensor network contraction admits the expansion
\begin{equation}\label{eq:loop_expansion_reorganized}
  \ZZ(\tilde{T}) = 1 + \sum_{\substack{\Gamma\subset\mathcal{L}\\\Gamma\text{ finite, compatible}}}
  \prod_{l\in\Gamma} Z_l
\end{equation}
where the sum runs over all finite sets \(\Gamma\) of mutually compatible loops.
\end{prop}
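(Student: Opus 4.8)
The plan is to begin from the loop series of Lemma~\ref{lem:loop_expansion}, which after normalization reads $\ZZ(\tilde{T}) = 1 + \sum_{l\in\mathcal{L}_G} Z_l$ since $Z_0 = 1$, and to reindex the sum over \emph{all} generalized loops as a sum over families of \emph{connected} loops. The organizing principle is that each generalized loop $l$ splits uniquely into its connected components $l_1,\dots,l_k$, each of which is a connected generalized loop, and that distinct components are vertex- and edge-disjoint. By Definition~\ref{def:incompatibility} this is exactly the statement that $\{l_1,\dots,l_k\}$ is a finite compatible family. Conversely, the vertex- and edge-disjoint union of any finite compatible family of connected loops is again a generalized loop whose components recover the family.

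First I would make this correspondence precise as a bijection between nonempty generalized loops and nonempty finite compatible families of connected loops: the map $l \mapsto \{\text{connected components of } l\}$ is well defined because the component decomposition of a graph is canonical, its candidate inverse is $\Gamma \mapsto \bigcup_{l'\in\Gamma} l'$, and the two are mutually inverse precisely because compatibility coincides with disjointness. This step is pure bookkeeping but is where one must guard against double counting: every disconnected loop appears once in Lemma~\ref{lem:loop_expansion}, and it must be matched with exactly one compatible family, which the uniqueness of the component decomposition guarantees.

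Next I would transport the weights $Z_l$ across the bijection using the multiplicativity already recorded above: for a disjoint union $l = l_1 \sqcup l_2$ one has $Z_l = Z_{l_1} Z_{l_2}$, and iterating over the $k$ components by a one-line induction gives $Z_l = \prod_{i=1}^{k} Z_{l_i}$. Substituting this into the loop expansion and replacing the index $l$ by its associated compatible family $\Gamma$ turns $\sum_{l} Z_l$ into $\sum_{\Gamma} \prod_{l'\in\Gamma} Z_{l'}$ over nonempty compatible families, and collecting the vacuum term $1$ yields the claimed identity.

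I expect the genuine content to sit in the factorization $Z_l = \prod_i Z_{l_i}$ rather than in the combinatorics. Establishing it cleanly amounts to showing that a rank-one vacuum insertion $\mu_{wv}\otimes\mu_{vw}$ on any unexcited edge severs the contraction into a tensor product and that the entire unexcited remainder collapses to $1$ under the chosen BP normalization. The case deserving the most care is when two components are vertex- and edge-disjoint yet adjacent in $G$: there a shared neighbor or a connecting edge of $G$ lies outside both loops and therefore carries only rank-one vacuum data, so the components still cannot become entangled through the intervening region. Once this separation is verified, the remainder of the argument is routine reindexing.
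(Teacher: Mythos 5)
Your proposal is correct and follows essentially the same route as the paper, which likewise starts from Lemma~\ref{lem:loop_expansion}, identifies each generalized loop with the compatible family of its connected components, and invokes the factorization $Z_l = Z_{l_1} Z_{l_2}$ for disjoint unions under the BP normalization. Your additional care in justifying the factorization via the rank-one vacuum insertions on unexcited edges is a welcome elaboration of a step the paper asserts without detail.
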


We now define the notion of cluster. A cluster is a multiset of loops.

\begin{defn}[Clusters]
  A cluster is a collection of tuples of the form $$\mathbf{W} = \{(l_1, \eta_1), (l_2, \eta_2), \ldots, (l_m, \eta_m)\}$$ where each \(l_i\in \LL\) is a loop and \(\eta_i\) is the multiplicity of the loop \(l_i\) in the cluster. The total number of loops in the cluster is denoted as $n_{\mathbf{W}} = \sum_{i=1}^m \eta_i$.
\end{defn}

Let the number of edges in loop $l$ be denoted as $|l|$. We define the cluster weight $|\mathbf{W}| = \sum_{i} \eta_i |l_i|$. We also denote $\mathbf{W}! = \prod_i \eta_i !$. We denote the correction of the cluster $Z_{\mathbf{W}}$ as the product of the loop corrections raised to their respective multiplicities:
\begin{defn}[Cluster correction]
For a cluster \(\mathbf{W} = \{(l_1, \eta_1), (l_2, \eta_2), \ldots, (l_m, \eta_m)\}\), the cluster correction is defined as
\begin{equation}
  Z_{\mathbf{W}} = \prod_{i=1}^m Z_{l_i}^{\eta_i}.
\end{equation}
\end{defn}

Given a cluster \(\mathbf{W}\), we define the \emph{interaction graph} as follows.

\begin{defn}[Interaction Graph]
  Given a cluster \(\mathbf{W} = \{(l_1, \eta_1), (l_2, \eta_2), \ldots, (l_m, \eta_m)\}\), we define the interaction graph \(G_\mathbf{W} = (V_\mathbf{W}, E_\mathbf{W})\) with $|V_\mathbf{W}| = \sum_{i=1}^m \eta_i$ vertices with each loop \(l_i\) corresponds to \(\eta_i\) vertices. There is an edge $(l,l') \in  E_\mathbf{W}$ either if the loops \(l\) and \(l'\) are incompatible $l \not\sim l'$, or they are identical $l=l'$ 
\end{defn}

We call a cluster \(\mathbf{W}\) \emph{connected} if the interaction graph \(G_\mathbf{W}\) is connected, meaning there is a path between any two vertices in the interaction graph.

We now show the most important lemma: only connected families of loops contribute to the expansion of \(\log \ZZ\). This is crucial for the convergence of the series.

\begin{lemma}[Connected clusters only]\label{lem:connected}
The free energy $\log\ZZ$ can be expressed as
\begin{equation}
  \log\ZZ = \sum_{m=1}^{\infty} \sum_{l_1 \in \LL^c} \sum_{l_2 \in \LL^c} \cdots \sum_{l_m \in \LL^c} \phi(l_1,\ldots,l_m) \prod_{i=1}^m Z_{l_i},
\end{equation}
with loops appearing including multiplicities. Define the cluster $\mathbf{W} = \{l_1, \ldots, l_m\}$, the coefficient $\phi(l_1,\ldots,l_m)$ is called the \emph{Ursell function} and is given by
\begin{equation}
  \phi(l_1,\ldots,l_m) =
  \begin{cases}
    \frac{1}{m!} \sum_{\substack{C \in G_\mathbf{W} \\ C \text{ connected}}} (-1)^{|E(C)|}  & \text{if $\mathbf{W}$ is connected} \\
    0 & \text{if $\mathbf{W}$ is disconnected.}
  \end{cases}
  \end{equation}
Where $C$ sums over all connected subgraphs of the interaction graph \(G_\mathbf{W}\) spanning all vertices, and \(|E(C)|\) is the number of edges in the subgraph \(C\). $i,j$ are vertices in $C$ and are implicitly mapped to the loops $l_i$ and $l_j$ in the ordered list $(l_1, \ldots, l_m)$. 
\end{lemma}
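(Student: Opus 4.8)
The plan is to derive the connected-cluster formula by taking the logarithm of the reorganized loop expansion in Eq.~(\ref{eq:loop_expansion_reorganized}) and expanding it as a formal power series in the loop corrections $\{Z_l\}$. This is the classical route to the Mayer/Ursell cluster expansion. First I would write $\log\ZZ(\tilde{T}) = \log\left(1 + \sum_{\Gamma} \prod_{l\in\Gamma} Z_l\right)$, where the inner sum runs over finite compatible families $\Gamma$ of connected loops. The key combinatorial device is to encode the compatibility constraint using the hard-core interaction: for any two connected loops $l,l'$, define $\zeta(l,l') = -1$ if $l\not\sim l'$ (or $l=l'$) and $0$ otherwise, so that the restriction to compatible families is enforced by the product $\prod_{i<j}(1+\zeta(l_i,l_j))$. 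With this, the partition sum becomes the grand-canonical partition function of an abstract polymer gas with hard-core exclusion, exactly the setting of the Kotecky--Preiss framework \cite{kotecky1986cluster}.

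The main step is the standard theorem that the logarithm of such a polymer partition function equals a sum over \emph{clusters} (ordered or multiset tuples of polymers) weighted by the Ursell function. Concretely, I would expand $\log(1 + X)$ where $X=\sum_\Gamma \prod_{l\in\Gamma}Z_l$, collect terms by which multiset of loops $\{l_1,\dots,l_m\}$ appears, and identify the resulting coefficient. The combinatorial identity at the heart of this is that after the $\log$ expansion and resummation, the coefficient of $\prod_i Z_{l_i}$ is precisely $\frac{1}{m!}\sum_{C} (-1)^{|E(C)|}$, where $C$ ranges over connected spanning subgraphs of the interaction graph $G_{\mathbf{W}}$, with edges placed between incompatible or coincident loops. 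This is exactly the Ursell function $\phi(l_1,\dots,l_m)$ stated in the lemma. The factor $1/m!$ arises because the loops are summed over independently with labels, while the cluster itself is unordered; the $\mathbf{W}!$ in the earlier Lemma~\ref{lem:connected_cluster} accounts for repeated loops in the multiset formulation.

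The crucial claim---that disconnected clusters contribute zero---follows from the standard fact that the Ursell function of a disconnected interaction graph vanishes. I would prove this by the factorization argument: if $G_{\mathbf{W}}$ splits into components $A$ and $B$ with no edges between them, then every spanning connected subgraph would need to connect $A$ to $B$, which is impossible, so the alternating sum $\sum_C (-1)^{|E(C)|}$ is empty and hence zero. Equivalently, one invokes the tree-graph/inclusion--exclusion identity $\sum_{C\subseteq G,\ C\text{ conn. spanning}} (-1)^{|E(C)|} = 0$ whenever $G$ is disconnected, which is a purely graph-theoretic statement. This is the combinatorial core and the cleanest way to isolate connectedness as the selection rule.

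I expect the main obstacle to be bookkeeping rather than conceptual: carefully matching the multiset/unlabeled cluster description (with multiplicities $\eta_i$ and the factor $\mathbf{W}!$) against the labeled expansion that naturally emerges from $\log(1+X)$, and verifying that the factor $1/m!$ versus $1/\mathbf{W}!$ conventions are consistent between the appendix statement (Lemma~\ref{lem:connected}) and the main-text statement (Lemma~\ref{lem:connected_cluster}). A subtlety worth flagging is the self-edge convention: the interaction graph places an edge between identical loops $l=l'$, which is what makes repeated loops (multiplicity $\eta_i>1$) automatically connected and correctly reproduces the toy-example result $\phi(\{(l,m)\}) = (-1)^{m+1}/m$. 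I would verify this convention is used consistently in the $\zeta$-function definition so that the hard-core expansion reproduces the claimed formula on the diagonal. Convergence of the formal series is not needed here---that is deferred to the Kotecky--Preiss estimate---so this lemma is purely an algebraic identity of formal power series, which sidesteps all analytic difficulties.
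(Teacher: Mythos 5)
Your proposal is correct and follows essentially the same route as the paper: rewrite the compatible-family sum as an unrestricted labeled sum with hard-core factors $\prod_{i<j}(1-\Delta(i,j))$, expand over subgraphs of $K_m$, use factorization over connected components together with the exponential formula to extract $\log\ZZ$, and observe that disconnected clusters vanish because no connected spanning subgraph of $G_{\mathbf{W}}$ exists. Your flagged subtleties (the self-edge convention for repeated loops and the $1/m!$ versus $1/\mathbf{W}!$ bookkeeping) are exactly the points the paper handles in passing from Lemma~\ref{lem:connected} to Corollary~\ref{lem:connectedclustersonly}.
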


The above lemma sums over an ordered list of loops, which contains redundancies. We re-express the above lemma in terms of the cluster correction \(Z_{\mathbf{W}}\):
\begin{corollary}[Connected clusters only, reorganized]\label{lem:connectedclustersonly}
The free energy can be expressed as
\begin{equation}
  \log \ZZ = \sum_{\text{connected} \, \mathbf{W}} \phi(\mathbf{W}) Z_{\mathbf{W}},
\end{equation}
where the sum runs over all connected clusters \(\mathbf{W}\). The coefficient \(\phi(\mathbf{W})\) is given by
\begin{equation}
  \phi(\mathbf{W}) = \frac{1}{\mathbf{W}!} \sum_{\substack{C \in G_\mathbf{W} \\ C \text{ connected}}} \sum_{(i,j) \in C} (-1)^{|E(C)|}
\end{equation}
\end{corollary}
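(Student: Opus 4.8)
The plan is to recognize Eq.~\eqref{eq:loop_expansion_reorganized} as the partition function of an abstract polymer model whose polymers are the connected loops $l\in\LL^c$ carrying activities $Z_l$, and then run the classical Mayer/linked-cluster derivation on its logarithm. Since the graph is finite there are only finitely many loops, so I would treat the $Z_l$ as formal indeterminates and establish the statement as an identity in the ring of formal power series; absolute convergence of the resulting series under the stated loop-decay hypothesis is exactly what the Kotecký–Preiss estimate behind Theorem~\ref{thm:convergence_informal} provides, so at this stage only the algebraic identity is needed.

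First I would pass from the sum over finite compatible families to a sum over ordered tuples. Introducing the compatibility indicator $\zeta(l,l')=\mathbbm{1}[l\sim l']$ and using that a loop is incompatible with itself, so $\zeta(l,l)=0$ automatically enforces distinctness, Eq.~\eqref{eq:loop_expansion_reorganized} becomes
\begin{equation}
  \ZZ(\tilde T)=\sum_{m\ge 0}\frac{1}{m!}\sum_{l_1,\dots,l_m\in\LL^c}\Big(\prod_i Z_{l_i}\Big)\prod_{1\le i<j\le m}\zeta(l_i,l_j),
\end{equation}
the $1/m!$ compensating the $m!$ orderings of each compatible set, with the $m=0$ term equal to $1$. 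Next comes the Mayer trick: write $\zeta=1+f$ with $f(l,l')=\zeta(l,l')-1\in\{-1,0\}$, where $f=-1$ precisely when $l,l'$ are incompatible or identical, i.e.\ exactly when $(l,l')$ is an edge of the interaction graph of Definition~\ref{def:interaction_graph}. Expanding $\prod_{i<j}(1+f)=\sum_{g\subseteq K_m}\prod_{(i,j)\in g}f(l_i,l_j)$ over all graphs $g$ on $\{1,\dots,m\}$, each surviving product equals $(-1)^{|E(g)|}$ and is nonzero only when $g$ is a subgraph of $G_{\mathbf W}$.

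The structural heart is that, for a fixed tuple and graph $g$, the summand $\big(\prod_i Z_{l_i}\big)\prod_{(i,j)\in g}f(l_i,l_j)$ is multiplicative over the connected components of $g$: the per-vertex weights $Z_{l_i}$ and the edge factors $f$ both split across components. This multiplicativity lets me invoke the exponential formula for labelled structures, reorganizing $\ZZ$ as the exponential of its connected part, so that
\begin{equation}
  \log\ZZ(\tilde T)=\sum_{m\ge 1}\frac{1}{m!}\sum_{l_1,\dots,l_m\in\LL^c}\Big(\sum_{\substack{g\ \text{connected,}\\ \text{spanning}\ \{1,\dots,m\}}}\prod_{(i,j)\in g}f(l_i,l_j)\Big)\prod_i Z_{l_i}.
\end{equation}
Identifying the inner sum with $\sum_{C\subseteq G_{\mathbf W},\,C\ \text{conn.\ spanning}}(-1)^{|E(C)|}$ produces the Ursell function, and the connectedness restriction forces $\phi=0$ whenever $G_{\mathbf W}$ is disconnected, since then no spanning connected $g$ fits inside $G_{\mathbf W}$; this is precisely the ``connected clusters only'' assertion of Lemma~\ref{lem:connected}. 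Finally I would group the ordered tuples by their underlying multiset $\mathbf W=\{(l_i,\eta_i)\}$: each cluster is reached by $m!/\mathbf W!$ orderings, converting $1/m!$ into $1/\mathbf W!$ and $\prod_i Z_{l_i}$ into $Z_{\mathbf W}$, which yields the reorganized Corollary~\ref{lem:connectedclustersonly}.

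The hard part will be the rigorous justification of the exponential formula (linked-cluster theorem): I must verify that the decomposition of each labelled graph into connected components, together with the per-vertex activities, makes the total summand genuinely multiplicative over components, and that the bookkeeping of the $1/m!$ against the multinomial component-assignment factors reproduces $\exp(\cdot)$ exactly. I would establish this by the standard device of conditioning on the connected component containing vertex $1$ and inducting on $m$, or equivalently through the species/exponential-generating-function formalism. By contrast, the loop-specific content is immediate: it amounts only to identifying the Mayer factor $f$ with adjacency in the interaction graph, which follows directly from Definitions~\ref{def:incompatibility} and~\ref{def:interaction_graph}.
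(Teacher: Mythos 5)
Your proposal is correct and follows essentially the same route as the paper: pass from compatible sets to ordered tuples with a $1/m!$, apply the Mayer expansion $\prod_{i<j}(1-\Delta(i,j))=\sum_{g\subseteq K_m}\prod_{(i,j)\in g}(-\Delta(i,j))$, use multiplicativity over connected components together with the exponential formula to extract $\log\ZZ$ as the sum over connected spanning graphs, and finally regroup ordered tuples into multisets with the $m!/\mathbf{W}!$ factor. The paper proves the exponential-formula step by explicitly summing over partitions of the $m$ vertices with multinomial coefficients, which is the same standard linked-cluster argument you sketch.
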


The proof of Corollary~\ref{lem:connectedclustersonly} follows from Lemma~\ref{lem:connected}, where we realize that each cluster \(\mathbf{W}\) shows up $m!/\mathbf{W}!$ times in the expansion of \(\log Z\). The factor $m!$ counts the permutations of the loops in the ordered list, while the factor \(\mathbf{W}!\) removes the redundancies due to the multiplicities of the loops in the cluster.

\subsection{Cluster expansion of the free energy}
We derive the cluster expansion of the free energy from the loop expansion of the partition function. This proves Lemma~\ref{lem:connected} and Corollary~\ref{lem:connectedclustersonly}. The proof follows from Chapter 5 of \cite{friedli2017statistical}.

\begin{proof}

(Proof of Lemma \ref{lem:connected}) We start from the loop expansion of the partition function (Eq.(\ref{eq:loop_expansion_reorganized})), reproduced here for convenience:

\begin{equation}
  \ZZ(\tilde{T}) = 1 + \sum_{\substack{\Gamma\subset\mathcal{L}\\\Gamma\text{ finite, compatible}}}
  \prod_{l\in\Gamma} Z_l
\end{equation}
Where the sum runs over all finite sets \(\Gamma\) of mutually compatible loops. Next, we convert the loop expansion of $\ZZ(\tilde{T})$ to a cluster expansion of the free energy $\FF(\tilde{T})$. For two connected loops $l_i$ and $l_j$, we define $\Delta(i,j)$ to be one if they are incompatible, and zero otherwise. Let the total number of loops be $|\mathcal{L}| = N$. We can rewrite Eq. (\ref{eq:loop_expansion_reorganized}) as
\begin{equation}
  \ZZ(\tilde{T}) = 1 + \sum_{m=1}^\infty \frac{1}{m!} \sum_{l_1, l_2, \ldots, l_m} \prod_{i} Z_{l_i} \prod_{1\le i<j\le m} (1 - \Delta(i,j))
\end{equation}
Where each $l_i$ sums over all connected loops. The factor of $1/m!$ removes the overcounting when going to an ordered list of loops. One can see that whenever there is a pair of incompatible $1 - \Delta(i,j)$ becomes zero. Next, we expand the product of $1 - \Delta(i,j)$ in the following way.
\begin{equation}
  \ZZ(\tilde{T}) = 1 + \sum_{m=1}^\infty \frac{1}{m!} \sum_{l_1, l_2, \ldots, l_m} \prod_{i} Z_{l_i} \sum_{G \in K_m} \prod_{(i,j) \in E(G)} (-\Delta(i,j))
\end{equation}
Where $K_m$ is the complete graph on $m$ vertices and $G$ sums over all subgraphs of $K_m$. $E(G)$ is the edge set of $G$. To simplify notations, for each graph $G$ we define $Q_G$ as
\begin{equation}
  Q_G = \sum_{l_1, l_2, \ldots, l_m} \prod_{a} Z_{l_a} \prod_{(i,j) \in E(G)} (-\Delta(i,j))
\end{equation}
Then we have
\begin{equation}
  \ZZ(\tilde{T}) = 1 + \sum_{m=1}^\infty \frac{1}{m!} \sum_{G \subset K_m} Q_G
\end{equation}
In general, $G$ may be disconnected. Suppose $G$ has $k$ connected components $G_1, G_2, \ldots, G_k$, then $Q_G$ admits the decomposition
\begin{equation}
  Q_G = \prod_{j=1}^k Q_{G_j}
\end{equation}
Plugging this back, we have
\begin{equation}
  \sum_{G \subset K_m} Q_G = \sum_{G \subset K_m}  \prod_{j=1}^k Q_{G_j}
\end{equation}
Instead of summing over $G$, we now sum over all possible partition of $m$ vertices into $k$ parts, and then sum over all connected graphs on each part. Therefore,
\begin{equation}
  \begin{split}
    \sum_{G \subset K_m} Q_G = \sum_{k=1}^{n} \sum_{\substack{m_1, m_2, \ldots, m_k \\ m_1 + m_2 + \ldots + m_k = m}} \frac{m!}{m_1! m_2! \ldots m_k!} \sum_{\substack{G_j \subset K_{m_j} \\ G_j connected, \forall j}}  \prod_{j} Q_{G_j}
  \end{split}
\end{equation}
Where $\frac{m!}{m_1! m_2! \ldots m_k!}$ counts the number of ways to partition $m$ vertices into $k$ parts with sizes $m_1, m_2, \ldots, m_k$. Instead of summing over $m$ first and then contraint $m_1 + m_2 + \ldots + m_k = m$, We can sum over $m_1, m_2, \ldots, m_k$ directly since $m$ goes to infinity. Therefore,
\begin{align}
  \ZZ(\tilde{T}) &= 1 + \sum_{k=1}^\infty \frac{1}{k!} \sum_{m_1, m_2, \ldots, m_k} \prod_{j=1}^k \left(\frac{1}{m_j!} \sum_{\substack{G_j \subset K_{m_j} \\ G_j connected}}  Q_{G_j} \right) \\
  &= 1 + \sum_{k=1}^\infty \frac{1}{k!} \left( \sum_{m} \frac{1}{m!} \sum_{\substack{G \subset K_{m} \\ G \, connected}}  Q_{G} \right)^k
\end{align}
We identify the second line as the exponential function, so we arrive at the cluster expansion of the free energy
\begin{align}
  \FF(\tilde{T}) = \log(\ZZ(\tilde{T})) = \sum_{m=1}^\infty \frac{1}{m!} \sum_{\substack{G \subset K_{m} \\ G connected}}  Q_{G}
\end{align}
Finally, we unwrap $Q_G$ to obtain the Ursell function and rephase the summation $G$ into a sum of clusters. In $Q_G$ we sum over $l_1, l_2, \ldots, l_m$ independently. Therefore, we group them into a cluster $\mathbf{W}$ and sum over all clusters. This incurs a factor of $m! / \mathbf{W}!$ to account for the overcounting. Next, we observe that the summation of over connected $G$ can be rewritten as follows.
\begin{equation}
  \sum_{\substack{G \subset K_{m} \\ G connected}} \sum_{(i,j) \in E(G)} (-\Delta(i,j)) = \sum_{\substack{G \, \text{spanning} \, G_{\mathbf{W}} \\ G \, \text{connected}}} (-1)^{|E(G)|}
\end{equation}
Where $G_{\mathbf{W}}$ is the interaction graph of the cluster $\mathbf{W}$ defined in Definition \ref{def:interaction_graph}. If $\mathbf{W}$ is disconnected, then no connected $G$ can span $G_{\mathbf{W}}$ so the summation is zero. Therefore, we arrive at
\begin{equation}
  \FF(\tilde{T}) = \sum_{m=1}^{\infty} \sum_{\substack{\mathbf{W} \text{with $m$ loops} \\ \mathbf{W} \text{ connected}}} \frac{1}{\mathbf{W}!} \left( \sum_{\substack{G \text{spanning} G_{\mathbf{W}} \\ G \text{ connected}}} (-1)^{|E(G)|} \right) Z_{\mathbf{W}}
\end{equation}
Identifying the coefficient as the Ursell function, we arrive at the final form of the cluster expansion of the free energy

\end{proof}

\subsection{Convergence via the Kotecký–Preiss Criterion}
We define the \emph{truncated partition function} \(\tilde{Z}_m\) as the sum of the contributions from clusters of weight at most \(m\):
\begin{defn}
[Truncated partition function] The truncated partition function \(\tilde{Z}_m\) is defined as
\begin{equation}
  \log \tilde{Z}_m = \sum_{\substack{
    \text{connected }\, \mathbf{W} \\
    |\mathbf{W}| \le m
  }} \phi(\mathbf{W}) Z_{\mathbf{W}}.
\end{equation}
\end{defn}

We now apply the Kotecký–Preiss criterion to show that the series for \(\log Z\) converges absolutely under certain conditions on the loop corrections \(Z_l\).
\begin{lemma}[Kotecký–Preiss criterion for the cluster expansion]\label{lem:KP}
  If there exists two constants $a$, $d$ such that for every loop \(l\) in the tensor network, we have
\begin{equation}\label{eq:KPcondition}
  \sum_{l':\;l'\not\sim l} |Z_{l'}|\,e^{(a+d)|l'|} \le a|l|,
\end{equation}
then the series for \(\log Z\) converges absolutely. Moreover, for any vertex $i$, we have the bound on the convergence:
\begin{equation}\label{eq:KPbound}
  \sum_{\substack{
    \text{connected }\, \mathbf{W} \\
    \mathbf{W} \not\sim i
  }} |\phi(\mathbf{W}) Z_{\mathbf{W}}| e^{\sum_{l \in \mathbf{W}} d|l|} \le a
\end{equation}
Where $\mathbf{W} \not\sim i$ denotes clusters supported on site $i$.
\end{lemma}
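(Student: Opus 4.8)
The plan is to recognize the hypothesis \eqref{eq:KPcondition} as the Kotecký–Preiss criterion for the abstract polymer model whose polymers are the connected loops $\LL^c$, with incompatibility relation $\not\sim$ (overlap, together with the self-incompatibility $l \not\sim l$) and modified activities
\begin{equation}
  \tilde w(l) := |Z_l|\, e^{d|l|}, \qquad \alpha(l) := a|l|.
\end{equation}
With these definitions the left-hand side of \eqref{eq:KPcondition} is exactly $\sum_{l' \not\sim l} \tilde w(l')\, e^{\alpha(l')}$, so the assumption reads $\sum_{l'\not\sim l} \tilde w(l')\, e^{\alpha(l')} \le \alpha(l)$ for every loop $l$, which is the textbook KP inequality. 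The reorganized loop expansion \eqref{eq:loop_expansion_reorganized} and Corollary~\ref{lem:connectedclustersonly} already cast $\log\ZZ$ in polymer-model form $\sum_{\text{conn }\mathbf{W}} \phi(\mathbf{W}) Z_{\mathbf{W}}$, so proving the lemma amounts to running the KP machinery on this model.

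The core step is the inductive bound
\begin{equation}
  \Pi(\gamma) := \sum_{\substack{\text{conn } \mathbf{W}\\ \mathbf{W}\not\sim\gamma}} |\phi(\mathbf{W})| \prod_{l\in\mathbf{W}} \tilde w(l)\ \le\ \alpha(\gamma),
\end{equation}
valid for every loop $\gamma$, which I would establish by induction on the number of loops $n_{\mathbf{W}}$ in the cluster. First I would control the signed, potentially large Ursell function by the tree-graph (Penrose) inequality, $\mathbf{W}!\,|\phi(\mathbf{W})| \le \#\{\text{spanning trees of } G_{\mathbf{W}}\}$, which replaces the alternating sum over connected spanning subgraphs in \eqref{eq:ursell} by a positive, tree-indexed sum (here the interaction-graph edges carry incompatibility indicators of magnitude at most one, so each tree contributes at most $1$). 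I would then organize clusters incompatible with $\gamma$ by peeling off, in a rooted spanning tree, the loop attached to $\gamma$, obtaining a self-referential recursion of the schematic form $\Pi(\gamma) \le \sum_{\gamma'\not\sim\gamma} \tilde w(\gamma')\, \exp(\Pi(\gamma'))$. Truncating at cluster size $n$ and inserting the inductive hypothesis $\Pi(\gamma') \le \alpha(\gamma')$ closes the induction precisely when the assumption $\sum_{\gamma'\not\sim\gamma} \tilde w(\gamma') e^{\alpha(\gamma')} \le \alpha(\gamma)$ holds; the monotone, uniformly bounded partial sums then yield absolute convergence of the series for $\log\ZZ$.

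To obtain the vertex-anchored bound \eqref{eq:KPbound}, I would rerun the same argument with the anchoring object taken to be the single site $i$ rather than a loop: declare $i$ incompatible with exactly those loops that contain $i$, and assign it mass $\alpha(i) = a$. The induction over genuine polymers is unchanged, and the only new input is the anchor sub-condition $\sum_{l\ni i} \tilde w(l)\, e^{\alpha(l)} \le a$, to be read off from \eqref{eq:KPcondition} by interpreting the per-edge mass $\alpha(l)=a|l|$ as mass $a$ carried by each vertex of $l$, hence mass $a$ at the anchor site $i$. Unwinding $\prod_{l\in\mathbf{W}} \tilde w(l) = |Z_{\mathbf{W}}|\, e^{\sum_{l\in\mathbf{W}} d|l|}$ then turns the polymer bound into \eqref{eq:KPbound}. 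Finally, since every connected cluster is supported on at least one of the $n$ sites, summing \eqref{eq:KPbound} over vertices bounds the full series by $na$, which supplies both absolute convergence and the basis for the truncation estimate of Theorem~\ref{thm:convergence_informal}.

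I expect the main obstacle to be the combinatorial heart of the KP step: bounding the alternating Ursell sum by the tree-graph inequality and closing the self-referential recursion for $\Pi(\gamma)$ without circularity. A secondary but genuine point of care is the passage from a loop anchor, where the right-hand side is $\alpha(l)=a|l|$, to the single-site anchor, where it must be exactly $a$; this hinges on distributing the mass $\alpha(l)=a|l|$ as $a$ per vertex and verifying the anchor sub-condition for $i$ directly, rather than naively minimizing $a|l|$ over loops through $i$ (which would only give $a\cdot\mathrm{girth}$).
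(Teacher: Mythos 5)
A point of orientation first: the paper does not actually prove Lemma~\ref{lem:KP} --- it is invoked as a black box, namely the Kotecký--Preiss criterion of \cite{kotecky1986cluster} applied to the abstract polymer model whose polymers are the connected loops. Your proposal instead reconstructs a proof of that criterion. Your setup (activities $\tilde w(l)=|Z_l|e^{d|l|}$, mass $\alpha(l)=a|l|$, incompatibility given by overlap plus self-incompatibility) is exactly the right polymer model, and your route to the polymer-anchored bound $\Pi(\gamma)\le\alpha(\gamma)$ --- Penrose/tree-graph control of $|\phi(\mathbf{W})|$ followed by the self-referential recursion closed by the KP inequality, inducting on $n_{\mathbf{W}}$ --- is a standard and sound argument (Friedli--Velenik's induction on partition-function ratios is an equivalent alternative). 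Up to that point you are doing strictly more than the paper, which offers no proof to compare against.

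The genuine gap is in the passage to the vertex-anchored bound \eqref{eq:KPbound}. You correctly identify that anchoring at a site $i$ requires the sub-condition $\sum_{l\ni i}|Z_l|e^{(a+d)|l|}\le a$, but this is \emph{not} a consequence of the stated hypothesis \eqref{eq:KPcondition}: since every $l'\ni i$ is incompatible with any fixed $l_0\ni i$, the most \eqref{eq:KPcondition} yields is $\sum_{l'\ni i}|Z_{l'}|e^{(a+d)|l'|}\le a|l_0|$, i.e.\ $a$ times the girth --- precisely the obstruction you name. Your proposed fix, ``distributing the mass $a|l|$ as $a$ per vertex and verifying the anchor sub-condition directly,'' is a reinterpretation of the hypothesis rather than a derivation from it; the per-vertex condition is a strictly stronger assumption than the per-loop condition, and a proof of Lemma~\ref{lem:KP} exactly as stated cannot simply assert it. (To be fair, the lemma as written in the paper shares this imprecision.) The gap is harmless downstream only because the verification of \eqref{eq:KPcondition} in Theorem~\ref{thm:convergence} proceeds via Lemma~\ref{lem:combinatorial_estimate}, which bounds the number of weight-$m$ loops through each \emph{single} vertex by $(\Delta-1)^m$ and therefore supplies the per-vertex anchor condition (up to an innocuous constant) directly. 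To make your proof self-contained you should either promote the per-vertex condition to a hypothesis of the lemma, or weaken \eqref{eq:KPbound} to $a\cdot\min_{l\ni i}|l|$ and absorb the extra factor into the constants of the final theorem.
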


Let $|\mathbf{W}| = \sum_{l \in \mathbf{W}} |l|$ be the weight of the cluster, defined as the total number of edges in it. We truncate the sum over clusters to those of weight at most \(m\):



\begin{equation}
  \tilde{F}_m = \sum_{\substack{
    \text{connected }\, \mathbf{W} \\
    |\mathbf{W}| \le m
  }} \phi(\mathbf{W}) Z_{\mathbf{W}}.
\end{equation}


To apply the above lemma, we will set $a(l) = \frac{1}{2}|l|$, where $|l|$ is the number of edges in $l$. Next, we give a combinatorial estimate on the number of loops $l'$ that are incompatible with a given loop \(l\). The bound is a functtion of the size of the loop \(l\) and \(l'\), and the maximal degree of the tensor network, defined as the maximum number of legs of all tensors in the network.
\begin{lemma}[Combinatorial estimate on loops] \label{lem:combinatorial_estimate}
Let \(l\) be a loop of size \(k\) in a tensor network with maximum degree \(\Delta\). Then the number of loops \(l'\) of size \(m\) that are incompatible with \(l\) is bounded by
\begin{equation}
  N_m \le \frac{k}{2} (\Delta-1)^m.
\end{equation}
\end{lemma}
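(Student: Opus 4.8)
The plan is to reduce the count to a union bound over the vertices of $l$ and then control, for each such vertex, the number of connected loops of size $m$ passing through it by a self-avoiding-walk growth argument. First I would use the definition of incompatibility (Definition \ref{def:incompatibility}): if $l'\not\sim l$ then $l$ and $l'$ share either an edge or a vertex, and sharing an edge forces sharing its endpoints, so in all cases $V(l)\cap V(l')\neq\emptyset$. Hence every loop $l'$ counted by $N_m$ contains at least one vertex of $l$, and a union bound gives
\begin{equation}
  N_m \le \sum_{v\in V(l)} R_m(v),
\end{equation}
where $R_m(v)$ is the number of connected loops of size $m$ containing $v$. Any $l'$ meeting $l$ in several vertices is overcounted here, which only strengthens the inequality.

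Next I would bound the two factors separately. For the number of vertices of $l$, I would invoke that a generalized loop has minimum degree at least two (Definition \ref{def:loop}), so by the handshake identity $2k=\sum_{v\in V(l)}\deg_l(v)\ge 2|V(l)|$, giving $|V(l)|\le k$. For $R_m(v)$ I would encode each connected loop through $v$ by a rooted traversal: orient the loop and grow it edge by edge starting from $v$, so that, since the ambient graph has maximum degree $\Delta$, every step after the first continues along one of at most $\Delta-1$ previously unused incident edges. This yields at most $\Delta(\Delta-1)^{m-1}$ oriented traversals of length $m$, and dividing by the two global orientations gives $R_m(v)\le \tfrac12(\Delta-1)^m$ after absorbing the first-step factor $\Delta$ versus $\Delta-1$ into the constant. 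Combining with $|V(l)|\le k$ produces $N_m\le \tfrac{k}{2}(\Delta-1)^m$.

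The main obstacle is making the growth count for $R_m(v)$ rigorous for \emph{generalized} loops rather than simple cycles: a generalized loop may revisit vertices, branch at vertices of degree $\ge 3$, and consist of several pieces that touch $v$ only at the root, so the naive self-avoiding-walk bound does not apply verbatim. The careful step is to fix a canonical spanning traversal of $l'$ (for instance a depth-first edge ordering rooted at $v$) in which each of the $m$ edges is charged exactly once and each extension has at most $\Delta-1$ admissible continuations, while using the degree-$\ge 2$ condition at every vertex to retain the orientation symmetry that supplies the factor $\tfrac12$. Once this canonical encoding is in place, the count reduces to the self-avoiding-walk estimate and the stated bound follows; the residual arithmetic is routine for $\Delta\ge 2$.
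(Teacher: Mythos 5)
Your overall strategy coincides with the paper's: every $l'$ incompatible with $l$ must meet $V(l)$, so one union-bounds over the vertices of $l$ and multiplies by a per-vertex count of connected loops with $m$ edges, the latter controlled by a connected-subgraph growth estimate in a degree-$\Delta$ graph. The only structural difference is where the factor of $\tfrac12$ is extracted. The paper charges it to the support of $l$, asserting $|V(l)|\le k/2$ from the min-degree-two condition, and then uses the tree bound $(\Delta-1)^m$ per vertex. You correctly observe that the handshake identity only yields $|V(l)|\le k$ (indeed a simple $k$-cycle has exactly $k$ vertices, so the support bound cannot be improved to $k/2$ in general), and you instead try to recover the missing $\tfrac12$ from an orientation symmetry of the per-vertex count $R_m(v)$.

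That relocated step is where your argument has a genuine gap. A generalized loop is an arbitrary edge set of minimum degree two: it can branch at vertices of degree $\ge 3$, revisit vertices, and is not in general a closed walk, so there is no canonical pair of ``global orientations'' whose quotient supplies a factor of $\tfrac12$. You flag exactly this as the main obstacle, but the canonical-traversal construction you defer is the entire content of the claim, and it is not clear it exists. There is also an arithmetic slip: $\Delta(\Delta-1)^{m-1}/2$ exceeds $(\Delta-1)^m/2$ by a factor $\Delta/(\Delta-1)$, so the first-step factor cannot simply be ``absorbed.'' What your argument rigorously delivers is $R_m(v)\le(\Delta-1)^m$ (the same tree bound the paper invokes for connected subgraphs rooted at a vertex), which combined with $|V(l)|\le k$ gives $N_m\le k(\Delta-1)^m$ --- the stated bound up to a factor of two. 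That constant is immaterial for the downstream Kotecký--Preiss estimate, but as written neither your orientation argument nor the paper's $|V(l)|\le k/2$ step actually justifies the prefactor $k/2$; if you want the lemma exactly as stated, the factor of two has to be found elsewhere or the constant in the lemma relaxed.
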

\begin{proof}
To begin with, the loop $l$ is supported on at most $k/2$ vertices, since each vertex has degree $>=2$. Therefore, we will bound the number of loop $l'$ supported on each of the $k/2$ vertices and then sum over all vertices.

For each vertex, the number of loops \(l'\) of size \(m\) supported on that vertex is bounded by the number of connected subgraphs of size \(m\) supported on that vertex. The latter is known to be bounded by \((\Delta-1)^m\) and is saturated by the tree. Therefore, the number of loops \(l'\) of size \(m\) that are incompatible with \(l\) is bounded by
\begin{equation}
  N_m \le \frac{k}{2} (\Delta-1)^m.
\end{equation}
\end{proof}

Applying Lemma~\ref{lem:combinatorial_estimate} to the Kotecký–Preiss criterion, we arrive at our main theorem on convergence.
\begin{theorem}[Convergence of the cluster expansion]\label{thm:convergence}
Given a tensor network with degree $\Delta$ and normalized by the BP fixed point. Assume there exists a constant \(c>\log(2(\Delta-1)) + \frac{1}{2} \) such that
\begin{equation}
  |Z_l|\le e^{-c |l|}
\end{equation}
Then the series for \(\log Z\) converges absolutely. Moreover, the error in truncating the series at order \(m\) is bounded by
\begin{equation}
  \left| \log Z - \tilde{F}_m \right| \le n e^{-d(m+1)}
\end{equation}
Where $d = c - \log(2(\Delta-1)) - \frac{1}{2}$.
\end{theorem}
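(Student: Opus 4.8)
The plan is to reduce the theorem to a single application of the Kotecký–Preiss criterion (Lemma~\ref{lem:KP}), whose hypothesis I verify using the exponential decay assumption together with the combinatorial estimate of Lemma~\ref{lem:combinatorial_estimate}. The object to control is the cluster expansion $\log Z = \sum_{\mathbf{W}} \phi(\mathbf{W}) Z_{\mathbf{W}}$ established in Corollary~\ref{lem:connectedclustersonly} (via Lemma~\ref{lem:connected}); once the KP condition holds, absolute convergence is immediate from the lemma, and the truncation bound follows by a short weight-counting argument. The only genuine choice is the pair of constants $(a,d)$ fed into Lemma~\ref{lem:KP}: following the hint I take $a(l) = \tfrac12|l|$, i.e.\ the constant $a = \tfrac12$, and let $d$ be the slack defined in the theorem statement.

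First I would verify the KP condition~\eqref{eq:KPcondition}. Substituting $|Z_{l'}| \le e^{-c|l'|}$ and organizing the sum over incompatible loops $l'$ by their size $m$ gives
\begin{equation}
  \sum_{l' \not\sim l} |Z_{l'}|\, e^{(a+d)|l'|} \le \sum_{m} N_m\, e^{-(c - a - d)m},
\end{equation}
where $N_m$ is the number of size-$m$ loops incompatible with $l$. Lemma~\ref{lem:combinatorial_estimate} bounds $N_m \le \tfrac{|l|}{2}(\Delta-1)^m$, so the right-hand side is at most $\tfrac{|l|}{2}\sum_{m} x^m$ with geometric ratio $x = (\Delta-1)e^{-(c-a-d)}$. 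To close the bound at $\tfrac{|l|}{2} = a|l|$ I need $\sum_{m\ge 1} x^m = x/(1-x) \le 1$, i.e.\ $x \le \tfrac12$. This rearranges to $c - a - d \ge \log(2(\Delta-1))$, and substituting $a = \tfrac12$ reproduces exactly $d = c - \log(2(\Delta-1)) - \tfrac12$, with $d>0$ guaranteed by the hypothesis $c > \log(2(\Delta-1)) + \tfrac12$. This confirms the KP condition (with equality in the worst case), so the series converges absolutely.

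It remains to convert the per-site KP bound~\eqref{eq:KPbound} into the stated global truncation error. Writing $\sum_{l\in\mathbf{W}} d|l| = d|\mathbf{W}|$, the bound reads $\sum_{\mathbf{W}\not\sim i} |\phi(\mathbf{W})Z_{\mathbf{W}}|\,e^{d|\mathbf{W}|} \le a = \tfrac12$ for every vertex $i$. The truncation error $|\log Z - \tilde F_m|$ is bounded by $\sum_{|\mathbf{W}|>m} |\phi(\mathbf{W})Z_{\mathbf{W}}|$; overcounting each surviving cluster by the vertices it touches gives $\sum_{|\mathbf{W}|>m} \le \sum_{i=1}^n \sum_{\mathbf{W}\not\sim i,\,|\mathbf{W}|>m}$. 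For any such cluster $|\mathbf{W}| \ge m+1$, so $1 \le e^{-d(m+1)}e^{d|\mathbf{W}|}$, and inserting this factor lets me invoke the KP bound term by term:
\begin{equation}
  |\log Z - \tilde F_m| \le e^{-d(m+1)} \sum_{i=1}^n \sum_{\mathbf{W}\not\sim i} |\phi(\mathbf{W})Z_{\mathbf{W}}|\,e^{d|\mathbf{W}|} \le n\, a\, e^{-d(m+1)} \le n\, e^{-d(m+1)}.
\end{equation}

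The argument is essentially mechanical once the two supporting lemmas are in hand, so I do not expect a serious obstacle; the real content has been front-loaded into the derivation of the cluster expansion and the KP criterion itself. The one place demanding care is the bookkeeping of the exponent split: the threshold $\log(2(\Delta-1)) + \tfrac12$ arises because the combinatorial factor $(\Delta-1)^m$ must be beaten by the decay $e^{-cm}$ after paying both the tilt $a=\tfrac12$ needed for the KP right-hand side and the extra margin $d$ that produces the exponential-in-$m$ truncation rate. Getting the factor of $2$ (from requiring geometric ratio $\le \tfrac12$ rather than merely $<1$) and the additive $\tfrac12$ (from the choice of $a$) correct is the only subtle step.
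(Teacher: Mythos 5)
Your proposal is correct and follows essentially the same route as the paper's own proof: verify the Kotecký–Preiss hypothesis with $a=\tfrac12$ using the combinatorial bound $N_m \le \tfrac{|l|}{2}(\Delta-1)^m$ and a geometric series with ratio forced to be at most $\tfrac12$, then convert the per-site KP bound into the global truncation error by organizing clusters by weight and summing over the $n$ vertices. The only cosmetic difference is that the paper retains the resulting factor of $\tfrac12 n$ before relaxing to $n$, exactly as you do.
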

\begin{proof}
By Lemma~\ref{lem:KP}, it suffices to show that
\begin{equation}
  \sum_{l':\;l'\not\sim l}|Z_{l'}|\,e^{(a+d)|l|} \le \frac{|l|}{2} \sum_{m\ge1} (\Delta-1)^{m} e^{(a+d-c) m} < a |l|
\end{equation}
where we have used the bound from Lemma~\ref{lem:combinatorial_estimate} and the assumption on the decay of \(Z_l\). This demands that
\begin{equation}
  \frac{1}{2} \sum_{m\ge1}  (\Delta-1)^{m} e^{(a+d-c) m} < a
\end{equation}
Set $r = (\Delta-1)e^{(a+d-c)}$. Then the series becomes
\begin{equation}
  \frac12\sum_{m\ge1}r^m = \frac12\frac{r}{1-r},
\end{equation}
and the condition
\(\tfrac12\sum r^m<a\)
implies that 
\begin{equation}
  \frac12\frac{r}{1-r} \le a
\end{equation}
We set $a= \frac{1}{2}$. This gives us the condition
\begin{equation}
  \frac{r}{1-r} \le 1
\end{equation}
This gives a condition on $d-c$:
\begin{equation}
  d-c \le -\log(2(\Delta-1)) - \frac{1}{2} 
\end{equation}
Since $d$ controls the rate of convergence, we want $d>0$. this gives the condition \(c>\log(2(\Delta-1)) + \frac{1}{2} \) for absolute convergence of the series.

Next we bound the error. Setting $a=\frac{1}{2}$ in Eq.(\ref{eq:KPbound}) and organizing the series by the weight of the clusters, we have
\begin{equation}
  \sum_{k=1}^{\infty} \sum_{\substack{
    \text{connected }\, \mathbf{W} \\
    \mathbf{W} \not\sim i \\
    |\mathbf{W}| = k
  }} |\phi(\mathbf{W}) Z_{\mathbf{W}}| e^{d k} \le \frac{1}{2}
\end{equation}
This implies that the series decays as at least \(e^{-d k}\). Therefore, truncating the cluster expansion at order \(m\) induces an error of $\frac{1}{2} e^{-d(m+1)}$.
\begin{equation}
  \sum_{k=m+1}^{\infty} \sum_{\substack{
    \text{connected }\, \mathbf{W} \\
    \mathbf{W} \not\sim i \\
    |\mathbf{W}| = k
  }} |\phi(\mathbf{W}) Z_{\mathbf{W}}| \le \frac{1}{2} e^{-d(m+1)}
\end{equation}
 Finally, We sum over all vertices \(i\) which over-counts the clusters. This results in the final bound
\begin{equation}
  \left| \log Z - \tilde{F}_m \right| \le \frac{1}{2}n e^{-d(m+1)}
\end{equation}
\end{proof}

Finally, we show that the number of connected clusters supported on one site grows at most exponentially. We restate the Lemma below.

\begin{lemma}[Number of Connected Clusters]\label{lem:num_connected_clusters_restate}
  The number of connected clusters \(\mathbf{W}\) of weight at most \(m\) supported on a vertex \(i\) is bounded by $(\Delta+2)^m$, where \(\Delta\) is the maximum degree of the tensor network.
\end{lemma}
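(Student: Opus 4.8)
The plan is to bound the number of connected clusters by controlled over-counting: I will encode each connected cluster by a rooted, ordered spanning tree of its interaction graph, argue that every connected cluster arises from at least one such encoding, and then bound the number of encodings of total weight $\le m$ by a multiplicative branching argument. Because the encoding map is surjective onto connected clusters, any upper bound on the number of encodings is automatically an upper bound on the number of clusters.

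First I would fix the root. A connected cluster $\mathbf{W}$ supported on $i$ has a connected interaction graph $G_{\mathbf{W}}$ (Definition \ref{def:interaction_graph}), and since incompatible or identical loops share a vertex of the underlying graph, the union $\bigcup_j l_j$ is a connected subgraph of $G$ containing $i$; hence at least one loop passes through $i$, and I choose such a loop as the root $l_0$. A generalized loop of size $k_0$ through $i$ is in particular a connected edge-subset of size $k_0$ through $i$, so by the subgraph count used in the proof of Lemma \ref{lem:combinatorial_estimate} there are at most $(\Delta-1)^{k_0}$ choices for $l_0$.

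Next I would bound the branching. Traversing a spanning tree of $G_{\mathbf{W}}$ away from the root, each newly attached loop $l'$ is joined to an already-placed parent loop $l$ by an edge of $G_{\mathbf{W}}$, i.e.\ $l'$ is either incompatible with $l$ or identical to it. In the incompatible case, Lemma \ref{lem:combinatorial_estimate} bounds the number of loops $l'$ of size $k'$ sharing a vertex with $l$ (of size $k$) by $\tfrac{k}{2}(\Delta-1)^{k'}$; the identical case merely repeats a loop, consuming weight $|l|\ge 1$ with essentially one choice. Organizing this as a weight generating function, the contribution of a subtree rooted at a loop of size $k$ obeys a fixed-point inequality of the schematic form
\[
  B_k(x) \;\le\; x^{k}\,\prod_{\text{children}}\Big(1 + \tfrac{k}{2}\sum_{k'\ge 1}(\Delta-1)^{k'} x^{k'}\,B_{k'}(x)\Big),
\]
where $x$ tracks total cluster weight. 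Summing the inner geometric series in $k'$ and bounding the product over branches is what converts the base $(\Delta-1)$ into $(\Delta+2)$; extracting the coefficient of $x^{w}$ then gives at most $(\Delta+2)^{w}$ connected clusters of weight exactly $w$ through $i$, and summing over $w\le m$ yields the claimed bound $(\Delta+2)^m$.

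The main obstacle is precisely this last bookkeeping step: controlling the sum over all spanning-tree shapes and all size partitions $\sum_j k_j \le m$ of the weight budget, while simultaneously accounting for multiplicities (identical loops, which form cliques in $G_{\mathbf{W}}$) without double-counting, and verifying that the combined shape factors, the per-attachment $\tfrac{k}{2}$ factors from Lemma \ref{lem:combinatorial_estimate}, and the geometric sums inflate the growth rate only by an additive constant in the base rather than producing a factorial or larger-than-$(\Delta+2)$ blow-up. This is the standard but delicate tree-graph / generating-function estimate at the heart of cluster-expansion counting, and pinning the constant down to $\Delta+2$ (rather than a looser $C\Delta$) will require being careful about how tightly the root and attachment factors are bounded.
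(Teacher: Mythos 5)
Your proposal identifies a plausible strategy (encode a connected cluster by a rooted spanning tree of its interaction graph and bound the branching), but it stops exactly where the lemma's content lies. The entire quantitative claim --- that the count is $(\Delta+2)^m$ rather than, say, $(C\Delta)^m$ or something superexponential --- is deferred to the ``last bookkeeping step,'' which you explicitly flag as the main obstacle and do not carry out. There is a concrete reason to doubt that this step goes through as sketched: the per-attachment factor $\tfrac{k}{2}$ from Lemma~\ref{lem:combinatorial_estimate} is not a constant but grows with the size $k$ of the parent loop (up to $m/2$), and on top of that you must sum over all spanning-tree shapes of the interaction graph and handle the cliques formed by repeated loops. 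Nothing in the proposal shows that the product of these factors inflates the base only by the additive constant $3$; a naive execution of your generating-function inequality produces extra polynomial-in-$m$ factors per branch and typically lands at a strictly larger base. So as written this is a gap, not a proof.

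The paper avoids this bookkeeping entirely with a different construction. It builds a layered graph $G_e$ by stacking copies $v_1,v_2,\ldots$ of each vertex $v$, keeping the edges of $G$ within each layer and adding vertical edges $(v_j,v_{j+1})$; this graph has maximum degree $\Delta+2$, which is the sole source of the constant in the lemma. Each connected cluster supported on $i$ is then mapped injectively to a connected edge-subgraph of $G_e$ containing $i_1$: loops are placed one at a time into layers, each new loop going into the layer just above the highest layer containing a loop incompatible with it, which guarantees both that no edge is reused and that the image stays connected. A single application of the tree bound for connected subgraphs of a degree-$(\Delta+2)$ graph then gives $(\Delta+2)^m$ directly, with no generating functions, no spanning-tree sum, and no $\tfrac{k}{2}$ factors to control. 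If you want to salvage your route, you would need to actually solve your fixed-point inequality and show the resulting growth rate; the layered-graph embedding is the device that makes that unnecessary.
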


\begin{proof}
    Starting from the graph $G$ of the tensor network, we define a new graph $G_e$ as follows. for each vertex $v$ in $G$ we create a series of vertices $v_1, v_2,\ldots$ in $G_e$. If there's a edge $(v,v')$ in $G$, we create edges between $(v_i, v'_i)$, for all $i$. We also create edges between $(v_i, v_{i+1})$, for all $v$ and $i$. One can think of $G_e$ as stacking layer of $G$ together and putting edges between layers.

    We show that every cluster $\mathbf{W}$ supported on vertex $v$ can be mapped to a connected subgraph $S_e$ of $G_e$ supported on $v_1$. Start by finding one loop $l_1$ in $\mathbf{W}$ that is supported on $v$. $l_1$ has to exist because otherwise $\mathbf{W}$ is not supported on $v$. Denote its edge set as $\{w^{(1)}, w^{'(1)}\}$. We add the edge set $\{w^{(1)}_1, w^{'(1)}_1\}$ (which is in the first layer) to $S_e$. 
    
    Next, we find another loop $l_2$ in $\mathbf{W}$ that is incompatible with $l_1$. $l_2$ has to exist because otherwise $\mathbf{W}$ is disconnected. Denote its edge set as $\{w^{(2)}, w^{'(2)}\}$. We add the edge set $\{w^{(2)}_2, w^{'(2)}_2\}$ (which is in the second layer) to $S_e$. 

We iterate this procedure. Every time we find a loop $l_i$ that is incompatible with one of the previously added loops and add it to some layer $j$. To determine $j$, we find the last layer $j_{\max}$ that contains an incompatible loop with $l_i$. We add the edge set $\{w^{(i)}_{j_{\max}+1}, w^{'(i)}_{j_{\max}+1}\}$ to $S_e$. None of the edges from this set has been added to $S_e$ because otherwise $l_i$ is incompatible with some loop in layer $j+1$. $S_e$ remains connected throughout this process: $l_i$ in layer $j_{\max}+1$ is connected to some loop in layer $j_{\max}$.
    
Therefore, we have mapped each cluster to a connected subgraph of $S_e$. We bound the number of connected clusters with the number of connected subgraphs supported on $v_1$. Since $S_e$ has a degree of $\Delta+2$, the tree bound of the connected subgraphs gives $(\Delta+2)^m$.
\end{proof}

\subsection{Evaluating the Ursell function of the Toy example}\label{app:ursell_toy}
In this section, we evaluate the Ursell function of the cluster $\{(l,m)\}$ in the toy example in the main text. First note that $\mathbf{W}!=m!$. Next, we compute the term that sums over connected graphs, which we call $C_m$.
\begin{equation}
    C_m = \sum_{\substack{G \, \text{spanning} \, G_{\mathbf{W}} \\ G \,\text{connected}}} (-1)^{|E(G)|}
\end{equation}
We first note that $G_{\mathbf{W}}$ is the complete graph with $m$ vertices since a loop is compatible with itself.  We denote the complete graph with $m$ vertices as $K_m$. We first define the following auxiliary quantity $A_m$ that sums over all spanning graphs that could be disconnected.
\begin{equation}
    A_m = \sum_{\substack{G' \, \text{spanning} \, G_{\mathbf{W}}}} (-1)^{|E(G)|}
\end{equation}
One can see that $A_m = (1 + (-1))^{|E(K_m)|}$ since $G_{\mathbf{W}}$ is the complete graph. Therefore, $A_m = 1$ when $m=0,1$ and $A_m = 0$ when $m>1$.

Now we relate $A_m$ to $C_m$. For each $G'$ we choose vertex one and let $S$ be the connected subgraph of $G$ that contains vertex one. Let $|S|=m'$. There are $\binom{m-1}{m'-1}$ possible choices of vertices in $S$. Since $S$ is disconnected with the complement $G / S$, $(-1)^{|E(G)|} = (-1)^{|E(S)|} \times (-1)^{|E(G/S)|} $. We sum over all possible vertex subsets forming $S$ to get
\begin{equation}
    A_m = \sum_{m'=1}^{m} \binom{m-1}{m'-1} C_{m'} A_{m-m'}
\end{equation}
We set $m>1$ so that $A_m=0$. The only non-trivial contribution on the right-hand side is when $m-m'=0,1$. Therefore,
\begin{equation}
    0 = \binom{m-1}{m-1} C_{m} A_0 + \binom{m-1}{m-2} C_{m-1} A_1 = C_m + (m-1) C_{m-1}
\end{equation}
Thus, we obtain a recursive relation for $C_m$. 
\begin{equation}
    C_m = -(m-1) C_{m-1}
\end{equation}
Starting from $C_1 = 1$, we get $C_m = (-1)^{m-1} (m-1)!$.

\section{Algorithm}\label{app:algorithm}

In this section we discuss the enumeration of connected clusters in detail. The first step is to enumerate all connected loops up to weight $m$ supported on a given vertex. We do this by starting from the given vertex and ``growing'' a connected subgraph by adding edges one at a time, until we reach the weight limit. This can be done using breadth-first search (BFS) or depth-first search (DFS). During the search, we check if the current subgraph is a generalized loop using \texttt{isGeneralizedLoop}, and if so we add it to the list of found loops. We also use \texttt{canPruneEarly} to discard branches that cannot lead to any valid generalized loop, which significantly speeds up the search. Empirically, we find that BFS is slightly faster than DFS, which could be related to the early pruning. The BFS algorithm is summarized in Algorithm \ref{alg:bfs-loop-m}.

We describe the functions \texttt{isGeneralizedLoop} and \texttt{canPruneEarly} here. Both function returns a true or false. \texttt{isGeneralizedLoop} checks if a connected subgraph is a generalized loop by verifying that all vertices have degree at least 2. Multiple conditions can enter \texttt{canPruneEarly}. First, if the number of degree-1 vertices exceeds twice the remaining edge budget, we can prune since each added edge can reduce the number of degree-1 vertices by at most 2. In addition, for graphs with symmetries (e.g., square lattices), differnt subgraphs can be related by symmetry operations (e.g., 90-degree rotations and reflections in square lattices). Therefore, if the current subgraph is related to a previously seen subgraph by a symmetry operation, we can prune the branch.

There could be redundancies in the found loops, e.g., the same loop could be found by different ways of growing. Therefore, we remove redundancies by using a canonical representation (e.g. sorted edge list) of the edge set to de-duplicate. The canonical representation can also be used in \texttt{canPruneEarly} to prevent revisiting the same subgraph.

\begin{algorithm}[H]
\textbf{Input:} Graph $G=(V,E)$, vertex $v\in V$, maximum weight $m$. \\
\textbf{Output:} $\mathcal{L}$, all connected subgraphs containing $v$ with $\lvert E(S)\rvert \le m$ that satisfy \texttt{isGeneralizedLoop}.
\caption{Enumerate generalized loops with BFS up to weight $m$}\label{alg:bfs-loop-m}
\begin{algorithmic}[1]
\State Initialize an empty list $\mathcal{L}$
\State Start from the trivial subgraph $S$ containing only $v$
\State Initialize a queue $Q$ and push $S$ into it
\While{$Q$ is not empty}
  \State Pop a subgraph $S$ from $Q$
  \If{\texttt{isGeneralizedLoop}$(S)$}
    \State Add $S$ to $\mathcal{L}$
  \EndIf
  \If{$|E(S)| = m$}
    \State \textbf{continue} \Comment{stop expanding if weight limit reached}
  \EndIf
  \For{each edge $e$ touching $S$}
    \State Form a new subgraph $S'$ by adding $e$ (and its endpoint) to $S$
    \If{\texttt{canPruneEarly}$(S')$}
      \State skip this branch
    \Else
      \State push $S'$ into $Q$
    \EndIf
  \EndFor
\EndWhile
\State \Return $\mathcal{L}$
\end{algorithmic}
\end{algorithm}

Next, we iterate over all sites in the graph, and for each site we find all connected loops supported on that site using Algorithm \ref{alg:bfs-loop-m}. If the graph has translation invariance, then we only run Algorithm \ref{alg:bfs-loop-m} on one site and translate the found loops to other sites. In the case where the bulk of the graph is translation invariant but there are boundaries, we run Algorithm \ref{alg:bfs-loop-m} on one bulk site and translate. When a loop runs over the boundary, we discard it. In the end we de-duplicate again since the same loop could grown from multiple sites.

With a list of all connected loops up to weight $m$, we can construct the loop interaction graph $F=(\mathcal{V},\mathcal{E})$ where each vertex $u\in\mathcal{V}$ is a connected loop. There is an edge between two vertices $u,u'\in\mathcal{V}$ if the corresponding loops are incompatible. Note a loop is always incompatible with itself so there is always a self-edge associated with each vertex. Then, we enumerate all connected clusters up to weight $m$ supported on a given site using a similar strategy. We start from each loop supported on the given site, and grow a connected cluster by adding neighboring loops in the interaction graph $F$. This step is usually much faster than enumerating loops, so we do not perform early pruning. We give a DFS version of the algorithm in Algorithm \ref{alg:dfs-clusters-m}.

\begin{algorithm}[H]
\textbf{Input:} Loop interaction graph $F=(\mathcal{V},\mathcal{E})$ where each $u\in\mathcal{V}$ is a generalized loop with weight $w(u)$; site $s$; maximum cluster weight $m$. \\
\textbf{Output:} $\mathcal{C}$, all connected clusters with weight $\le m$, supported on site $s$.
\caption{Enumerate connected loop-clusters (DFS, multiset, weight cap $m$)}\label{alg:dfs-clusters-m}
\begin{algorithmic}[1]
\State Initialize an empty list $\mathcal{C}$
\State Find all generalized loops supported on site $s$; call this set $\mathcal{S}\subseteq \mathcal{V}$
\For{each seed loop $u_0\in \mathcal{S}$}
  \State Start a cluster $C$ containing only $u_0$ (with multiplicity $1$)
  \State $\text{W} \gets w(u_0)$
  \State \textbf{DFS}$(C,\text{W})$
\EndFor
\State \Return $\mathcal{C}$
\\
\Function{DFS}{$C,\text{W}$}
  \State Add the current cluster $C$ to $\mathcal{C}$ \Comment{it is connected by construction and has total weight $\text{W}\le m$}
  \State Let $\partial C$ be all loop-vertices $v\in\mathcal{V}$ that are adjacent in $F$ to at least one loop appearing in $C$ (neighbors in $F$)
  \For{each $v \in \partial C$}
    \If{$\text{W} + w(v) \le m$}
      \State Form $C'$ by adding one more copy of $v$ to the multiset $C$
      \State \textbf{DFS}$(C',\, \text{W} + w(v))$
    \EndIf
  \EndFor
\EndFunction
\end{algorithmic}
\end{algorithm}

Finally, we iterate over all sites in the graph and run Algorithm \ref{alg:dfs-clusters-m}. We then perform deduplication again since the same cluster could be grown from multiple sites. This step can also exploit translation invariance if present.

\section{Additional Numerics} \label{sec:add_num}
\subsection{Fixed points of the Ising tensor}

A crucial aspect of the cluster expansion method is the choice of BP fixed point around which to perform the expansion. While the main text focuses on the algorithmic procedure, the selection of an appropriate fixed point fundamentally determines the convergence properties and accuracy of the subsequent cluster corrections. 

The BP algorithm and subsequent cluster corrections both operate by first establishing a set of self-consistent messages $\MM$. Typically, the fixed point is found through the iterative message passing procedure. However, for complex systems with multiple fixed points, the choice of which fixed point to use as the expansion basis becomes critical. 

In general, the fixed point landscape is uncontrolled and it is unclear whether there are more than one (or, zero) fixed points. However, for the 2D Ising model (or any translational invariant tensor network), we can map out the complete fixed point landscape as $\beta$ varies, revealing how this choice impacts the effectiveness of our method.

To characterize the fixed point structure, we define an ``energy" functional that measures self-consistency. Given the four-legged tensor $T$ and a message $\mu$, assuming translational invariance, we define:
\begin{equation}
    E(\mu) := \|\mu -  (\mu\otimes\mu\otimes \mu)\star T\|_2
\end{equation} 

The intuition behind $E(\mu)$ is that fixed points satisfying the self-consistency will have zero energy. For the Ising case, we parameterize $\mu \equiv \mu(\theta) = (\cos\theta,\sin\theta)$ and analyze $E(\mu(\theta)) \equiv E(\theta)$ as a function of angle $\theta$ in Supp. Fig.~\ref{fig:fig_fixed_points}(a). The results reveal a clear bifurcation structure: in the high-temperature regime $\beta < \beta_{\text{BP}}$, there exists a unique stable fixed point corresponding to the infinite-temperature solution. However, for $\beta > \beta_{\text{BP}}$, additional low-temperature fixed points emerge, creating multiple possible expansion bases.

This multiplicity of fixed points raises the following question: which fixed point provides the optimal basis for cluster expansion? To address this, we compare the performance of cluster expansions built upon different fixed points. Supplementary Figure~\ref{fig:fig_fixed_points}(b) shows the free energy density error for cluster expansions based on the message-passing fixed point (blue, solid) versus the infinite-temperature fixed point (green, dashed). The comparison reveals three distinct regimes with different optimal strategies: In the high-temperature phase ($\beta < \beta_{\text{BP}}$), both approaches yield identical results since the message-passing procedure naturally converges to the infinite-temperature fixed point. In the intermediate regime--—where BP theory predicts low-temperature behavior but the actual 2D Ising system remains in its high-temperature phase—--the low-temp fixed point provides a lower BP error, however the cluster expansion converges faster for the high-temp fixed point. This phenomenon reflects the fact that our BP+cluster method reduces to the traditional high-temperature cluster expansion when built upon the infinite-temperature fixed point. Finally, in the genuinely low-temperature phase, the message-passing fixed point significantly outperforms the infinite-temperature expansion, demonstrating the power of BP technology to capture the appropriate physics through the adaptive fixed point selection via message passing.

\begin{figure*}[t]   
    \centering
    \includegraphics[width=1.0\linewidth]{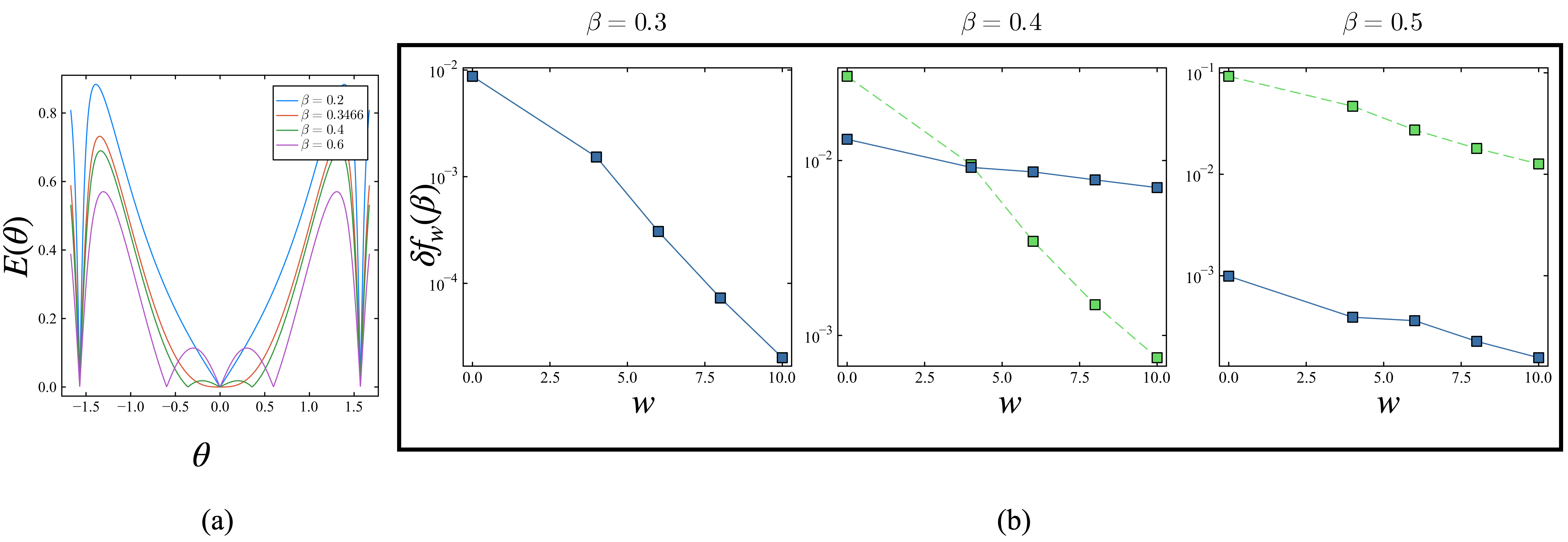}  
    \caption{\textbf{Fixed points:} (a) Fixed point `energy' landscape for the message $\mu_\theta = (\cos\theta,\sin\theta)$ (b) Free energy density error $\delta f_w(\beta)$ for the fixed point from message passing dynamics (blue, solid) and the infinite-temperature fixed point ($\theta=0$, green, dashed) for $\beta \in \{0.3,0.4,0.5\}$ and system size $L=20$.}
    \label{fig:fig_fixed_points}  
\end{figure*}

\subsection{BP correlation length and message propagation}

Having established that BP theory successfully captures distinct physics within different phases while predicting its own critical point, a natural question arises: how are correlations encoded and transmitted through the BP message structure? Understanding this mechanism is crucial for comprehending both the strengths and limitations of our cluster expansion approach.

To investigate the correlation structure within BP, we probe the system's response to localized perturbations. Specifically, we introduce a localized $z$-field perturbation at a single site in the 2D Ising model and examine how this disturbance propagates through the message network. This setup allows us to address a fundamental question: how does a local perturbation affect the self-consistent messages at distant locations?

Our analysis compares two systems: the original tensor network $\TT$ and its perturbed counterpart $\TT'$. After running the message-passing procedure to convergence on both networks, we quantify the perturbation's influence by measuring for each edge $e$,
\begin{equation}
    \Delta(e) = \| \mu_{e} - \mu'_{e}\|_2
\end{equation}
where $\MM = \{\mu_e\}$ and $\MM'=\{\mu'_e\}$ are the converged self-consistent messages for the unperturbed and perturbed systems, respectively. 

Supplementary Figure~\ref{fig:fig_message_correlation} displays this message difference on a logarithmic scale across the phase transition, examining inverse temperatures $\beta\in \{0.2,0.3,\beta_{\text{BP}}, 0.4,0.5\}$. The results reveal a striking temperature-dependent correlation structure: Deep within both high- and low-temperature phases, the perturbation's influence remains localized within an $O(1)$ neighborhood around the perturbation site, indicating a finite `message-correlation' length. However, at the BP critical point $\beta_{\text{BP}}$, the perturbation's influence extends over distances $O(L)$, propagating throughout the entire system. This critical behavior defines an effective BP correlation length $\xi_{\text{BP}}$ that diverges at the BP transition.

This correlation length analysis provides important insights into the computational complexity of our method. Through a simple light-cone argument, the message-passing algorithm's convergence time scales as $O(\xi_{\text{BP}} \cdot n)$, where $n = L^2$ represents the total number of vertices. 

\begin{figure*}[t]   
    \centering
    \includegraphics[width=1.0\linewidth]{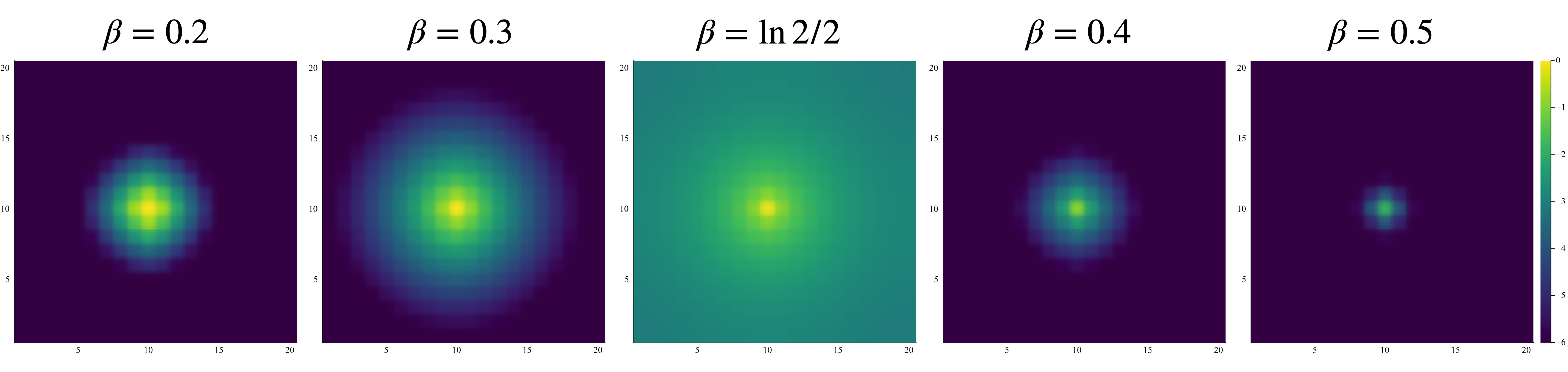}  
    \caption{\textbf{Message correlation:} Effect of a localized $z-$perturbation on the fixed point messages (plotted in log-scale) for the Ising model with system size $L=20$ across the phase transition, $\beta \in \{0.2,0.3,\beta_{\text{BP}}, 0.4,0.5\}$}
    \label{fig:fig_message_correlation}  
\end{figure*}

\end{document}